\tikzset{
    auto,node distance =1 cm and 1 cm,semithick,
    var/.style ={circle, draw, minimum width = 1cm, ultra thick},
    latent/.style ={regular polygon, regular polygon sides=3, inner sep=1pt, draw, minimum width = 1.2cm, ultra thick},
    point/.style = {circle, draw, inner sep=0.06cm, fill, node contents={}},
    triangle/.style = {regular polygon, regular polygon sides=3, draw, inner sep=0.06cm, fill, node contents={}},
    bidir/.style={Latex-Latex,dashed},
    dir/.style={-Latex, thick},
    el/.style = {inner sep=2pt, align=left, sloped}
}
\tikzstyle{vertex}=[circle, fill=black!10, draw=black]
\tikzstyle{edge}=[thick]
\tikzstyle{clique}=[line width=4, draw=black!70]
\newtheorem{theorem}{Theorem}
\newtheorem{lemma}[theorem]{Lemma}
\newtheorem{result}[theorem]{Result}
\definecolor{atomictangerine}{rgb}{1.0, 0.6, 0.4}
\definecolor{forestgreen}{rgb}{0.13, 0.55, 0.13}
\newcommand{\T}{\mathrm{T}}
\newcommand{\ket}[1]{\left| #1 \right\rangle}
\newcommand{\bra}[1]{\left\langle #1 \right|}
\newcommand{\dsep}[3]{#1 \perp_d #2 ~\vert ~ #3}
\newcommand{\pperp}{\perp \!\!\!\! \perp}
\newcommand{\cind}[3]{#1 \pperp #2 ~\vert ~ #3}
\newcommand{\stkout}[1]{\ifmmode\text{\sout{\ensuremath{#1}}}\else\sout{#1}\fi}
\def\minimize{\textrm{minimize}}
\def\st{\textrm{subject to }}
 \DeclareMathOperator{\tr}{tr}
\DeclareMathOperator{\Pa}{Pa}
\begin{document}
\title{Witnessing Non-Classicality in a Simple Causal Structure with Three Observable Variables}
\author{Pedro Lauand}
\email{p223457@dac.unicamp.br}
\affiliation{Instituto de Física “Gleb Wataghin”, Universidade Estadual de Campinas, 130830-859, Campinas, Brazil}
\author{Davide Poderini} 
\affiliation{International Institute of Physics, Federal University of Rio Grande do Norte, 59078-970, Natal, Brazil}
\author{Ranieri Nery} 
\affiliation{International Institute of Physics, Federal University of Rio Grande do Norte, 59078-970, Natal, Brazil}
\author{George Moreno} 
\affiliation{Departamento de Computação, Universidade Federal Rural de Pernambuco, 52171-900, Recife, Pernambuco, Brazil}
\author{Lucas Pollyceno}
\affiliation{Instituto de Física “Gleb Wataghin”, Universidade Estadual de Campinas, 130830-859, Campinas, Brazil}
\author{Rafael Rabelo}
\affiliation{Instituto de Física “Gleb Wataghin”, Universidade Estadual de Campinas, 130830-859, Campinas, Brazil}
\author{Rafael Chaves} 
\affiliation{International Institute of Physics, Federal University of Rio Grande do Norte, 59078-970, Natal, Brazil}
\affiliation{School of Science and Technology, Federal University of Rio Grande do Norte, Natal, Brazil}
\date{\today}

\date{\today}
\begin{abstract}
Seen from the modern lens of causal inference, Bell's theorem is nothing else than the proof that a specific classical causal model cannot explain quantum correlations. It is thus natural to move beyond Bell's paradigmatic scenario and consider different causal structures. For the specific case of three observable variables, it is known that there are three non-trivial causal networks. Two of those, are known to give rise to quantum non-classicality: the instrumental and the triangle scenarios. Here we analyze the third and remaining one, which we name the Evans scenario, akin to the causal structure underlying the entanglement-swapping experiment. We prove a number of results about this elusive scenario and introduce new and efficient computational tools for its analysis that also can be adapted to deal with more general causal structures. We do not solve its main open problem --whether quantum non-classical correlations can arise from it -- but give a significant step in this direction by proving that post-quantum correlations, analogous to the paradigmatic Popescu-Rohrlich box, do violate the constraints imposed by a classical description of Evans causal structure.
\end{abstract}

\maketitle
\section{Introduction}
Bell's theorem \cite{bell1964einstein} is a cornerstone of quantum theory, having far-reaching implications for its foundations as well as in applications for quantum information processing \cite{brunner2014bell}. The violation of a Bell inequality provides device-independent \cite{pironio2016focus} proof of the incompatibility of classical and quantum predictions, that is, solely based on the causal assumptions of an experiment and agnostic of any internal mechanisms of the involved physical apparatuses. More generally, it evidences the need for a genuine notion of quantum causal models \cite{fitzsimons2015quantum,ried2015quantum,fritz2016beyond,chaves2015information,costa2016quantum,allen2017quantum,aaberg2020semidefinite,wolfe2021quantum,ligthart2021convergent} in order to explain the correlations we observe in Nature.

Importantly, the mismatch between classical and quantum causal predictions can be generalized to causal structures beyond that in the paradigmatic Bell scenario. Motivated by the steady progress on quantum networks \cite{tavakoli2021bell}, there have been a number of results \cite{branciard2010characterizing,fritz2012beyond,tavakoli2014nonlocal,andreoli2017maximal,renou2019genuine,tavakoli2021bell,tavakoli2021bilocal,chaves2021causal,pozas2022full} proving that correlations across the distant parties of causal networks composed of independent sources can also exhibit non-classical behavior as already proven in a number of experiments \cite{carvacho2017experimental,saunders2017experimental,sun2019experimental,poderini2020experimental,carvacho2019perspective,cao2022experimental,suprano2022experimental}. In particular, quantum networks allow for a novel form of non-locality that, as opposed to Bell's theorem, does not require the need of measuring different observables \cite{fritz2012beyond,renou2019genuine,polino2022experimental}. In parallel, temporal scenarios based on causal structures involving communication between the parties \cite{brask2017bell,chaves2018quantum} have also provided a fruitful path for understanding the role of causality in quantum theory, for instance, showing that interventions --a central concept in the field of causal inference-- are able to reveal non-classicality in situations where Bell-like tests, based on observations, would simply fail \cite{gachechiladze2020quantifying,agresti2022experimental}.

It is thus natural to ask what causal structures can lead to non-classical behavior, a question that has remained elusive for two main reasons. The first is the fact that the number of possible causal structures increases very rapidly. Even to prove their equivalence classes -- that is, which causal structures can give rise to the same set of observed correlations-- has been solved up to three observable variables only \cite{evans2016graphs,ansanelli2022observational}. The other hurdle stems from the non-convex nature of the set of correlations permitted by general causal models \cite{garcia2005algebraic,geiger2013quantifier}. In spite of the number of complementary approaches developed in recent years \cite{chaves2014inferring,chaves2016polynomial,kela2019semidefinite,lee2017causal,WolfeSpekkensFritz+2019,pozas2019bounding,krivachy2020neural}, their practical use still is limited to a few cases of interest, which furthermore have to be evaluated on a case-to-case basis.

In the case of three observable nodes, it has been proved that there are a total of eight inequivalent classes of causal structures,
those depicted in Fig.~\ref{fig:nonlatent_dags} and Fig.~\ref{fig:all_dags}. From those, only three involve latent variables --that in a quantum description could be represented by entangled states-- and thus lead to correlations without a classical analog. Of these three, Fig.~\ref{fig:all_dags_instrumental} corresponds to the instrumental \cite{pearl1995testability,chaves2018quantum} and Fig.~\ref{fig:all_dags_triangle} to the triangle scenario \cite{fritz2012beyond,renou2019genuine}, bounded by Bell inequalities that can be violated with the help of entanglement, proving their non-classical nature. For the third causal structure, depicted in Fig.~\ref{fig:all_dags_evans} and to which we will refer as Evans causal structure \cite{evans2016graphs}, it is not known whether it can lead to non-classical correlations. 
We show that non-classical correlations reminiscent of PR-boxes \cite{popescu1994quantum} can violate the constraints implied by a classical description of this causal structure. The quantum violation of such bounds remains an open problem, however, as we show, a natural class of quantum correlations does have a classical explanation in such a scenario.
\begin{figure}
    \centering
    \begin{tikzpicture}
    \foreach \n in {2, 3} {
    \node[var] (c\n) at (2,\n*2.2 - 1) {$C$};
    \node[var] (b\n) at (1,\n*2.2) {$B$};
    \node[var] (a\n) at (0,\n*2.2 - 1) {$A$};
    }    
    \foreach \n in {4, 5} {
    \node[var] (c\n) at (6,\n*2.2 - 5.4) {$C$};
    \node[var] (b\n) at (5,\n*2.2 - 4.4) {$B$};
    \node[var] (a\n) at (4,\n*2.2 - 5.4) {$A$};
    }
    \node[var] (c1) at (4, 8) {$C$};
    \node[var] (b1) at (3, 9) {$B$};
    \node[var] (a1) at (2, 8) {$A$};
    \path[dir] (a1) edge (b1) (b1) edge (c1) (a1) edge (c1); 
    \path[dir] (a2) edge (b2) (c2) edge (b2);
    \path[dir] (a3) edge (b3) (b3) edge (c3);
    \path[dir] (b4) edge (c4); 
    \end{tikzpicture}
    \caption{\textbf{Possible tripartite causal structures with no latent variables}. Of the possible eight classes of causal structures with three observable node, five are represented by structures with no latent variables.
    }
    \label{fig:nonlatent_dags}
\end{figure}
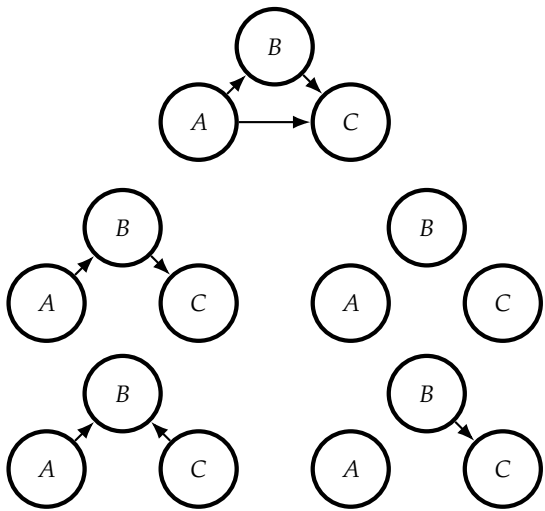

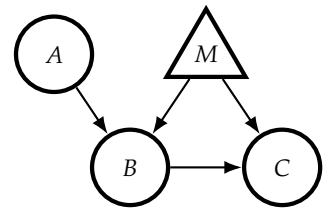
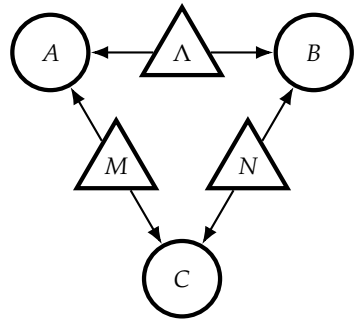
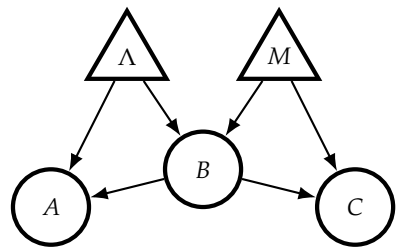
\begin{figure}
\begin{subfigure}[t]{.3\textwidth}
\begin{tikzpicture}
    \node[var] (b) at (2,0) {$C$};
    \node[var] (a) at (0,0) {$B$};
    \node[var] (x) at (-1,1.5) {$A$};
    \node[latent] (l) at (1,1.5) {$M$};
    \path[dir] (x) edge (a) (a) edge (b); 
    \path[dir] (l) edge (a) (l) edge (b);
\end{tikzpicture}
\caption{The Instrumental scenario}
\label{fig:all_dags_instrumental}
\end{subfigure}
\begin{subfigure}[t]{.3\textwidth}
    \begin{tikzpicture}

        \foreach [count=\k] \l/\n/\a in {1/A/\Lambda, 2/C/M, 3/B/N} {
            \draw (\k*360/3 - 30: 1cm) node[latent] (l\k) {$\a$};
            \draw (\k*360/3 + 30: 2cm) node[var] (\n) {$\n$};
			\path[dir] (l\k) edge (\n);
        }
		\foreach \k/\l in {1/B, 2/A, 3/C}
			\path[dir] (l\k) edge (\l);

    \end{tikzpicture}
    \caption{The Triangle scenario}
    \label{fig:all_dags_triangle}
\end{subfigure}
\begin{subfigure}[t]{.3\textwidth}
\begin{tikzpicture}
    \node[var] (a) at (-2,0) {$A$};
    \node[var] (c) at (2,0) {$C$};
    \node[latent] (l) at (-1,2) {$\Lambda$};
    \node[latent] (g) at (1,2) {$M$};
    \node[var] (b) at (0,.5) {$B$};
    \path[dir] (l) edge (a) (l) edge (b); 
    \path[dir] (g) edge (b) (g) edge (c); 
    \path[dir] (b) edge (a) (b) edge (c);
\end{tikzpicture}
    \centering        
    \caption{The Evans scenario}
    \label{fig:all_dags_evans}
\end{subfigure}
\caption{\textbf{Possible tripartite causal structures with latent variables}. There are a total of eight inequivalent classes of causal structures that can involve three observable nodes. Of these, only three of them contain at least one latent variable, a necessary condition to display a classical-quantum gap, that is, quantum correlations that cannot be explained by a classical causal model.}
\label{fig:all_dags}
\end{figure}

The paper is organized as follows. In Sec.~\ref{sec:dags} we introduce causal structures, their representation as directed graphs as well as a brief discussion of their equivalence classes. 
In Sec.~\ref{sec:evans} we prove some general results for the quantum and classical compatible distribution in the Evans scenario, using its similarity with the bilocality scenario. We further prove that, despite their similarities, a natural set of correlations in the bilocality scenario, involving measurements in the maximally entangled basis, do have a classical explanation in Evans, pointing out that the possible existence of genuine quantum distributions requires a more subtle approach.
In Sec.~\ref{sec:quad_prob} and \ref{sec: e-sep+inflation technique} we discuss two approaches to the causal compatibility problem: i) non-convex quadratic optimization and ii) the inflation technique \cite{wolfe2019inflation} augmented with e-separation \cite{Evans2012}, using both of them to demonstrate the nonclassicality of a  post-quantum distribution (similar to a PR-box) in the Evans scenario.
Finally, in Sec.~\ref{sec: Theory_independent_constraints} we discuss methods to derive conditions valid regardless of the nature of the latent sources (classical, quantum or post-quantum), which can effectively test the topology of the Evans causal structure.

\section{DAGs and their inequivalence classes}
\label{sec:dags}

A causal structure is represented by a directed acyclic graph (DAG) $\mathcal{G}$ which consists of a finite set of nodes $N_{\mathcal{G}}$ and a set of
directed edges $E_{\mathcal{G}}\subseteq N_{\mathcal{G}}\times N_{\mathcal{G}}$. These graphs need some distinction among the vertices to clarify if a node in the graph represents either an observable or an unobserved (or latent) variable. Graphically, we will use the circles to represent observable variables and triangles for latent variables see for example Fig.~\ref{fig:all_dags}), and we will denote the latter with greek letters.
We can define the concept of causal parents $\mathrm{Pa}(A)$ (or children $\mathrm{Ch}(A)$) of a given variable $A$ in $\mathcal{G}$, as the set of nodes sharing incoming (or outgoing) edges with $A$.
A joint distribution $p(\{a_i\}_i)$ over all the observable nodes $A_i \in O_{\mathcal{G}} \subset N_{\mathcal{G}}$ is said to be compatible with the DAG $\mathcal{G}$ if it satisfies the \emph{global Markov condition}, that is, it admits the following decomposition:
\begin{equation}
    p(\{a_i\}_i) = \sum_{\lambda \in L_{\mathcal{G}}} \prod_{v \in N_{\mathcal{G}}} p(v | \Pa(v))
    \label{eq:markov_condition}
\end{equation}
where $L_{\mathcal{G}} \subset N_{\mathcal{G}}$ is the set of latent variables.

An equivalent characterization of the compatibility of a distribution is given by the \emph{d-separation} criterion.
We first define the concept for an \emph{undirected path} in a DAG.
An undirected path $P \subset N_{\mathcal{G}}$ is a set of consecutive nodes of $\mathcal{G}$ connected by edges, regardless of their direction.
Such a path $P$ is said to be \emph{d-separated} by a set of nodes $X \subset N_{\mathcal{G}}$ if and only if
\begin{enumerate}
    \item There exists an $x \in X$ breaking at least a \emph{fork}
    ($a \leftarrow x \rightarrow b$) or a \emph{chain} ($a \rightarrow x
    \rightarrow b$) contained in $P$.
    \item If $P$ contains a \emph{collider} $a \rightarrow x \leftarrow b$ then
    neither $x$ nor its descendants are in $X$.
\end{enumerate}
Similarly, we say that two nodes $A, B$ (or equivalently, sets of nodes)
are \emph{d-separated} by a set of nodes $X$ if all paths connecting $A$
and $B$ are \emph{d-separated} by $X$, and it is denoted as $\dsep{A}{B}{X}$.
It can be shown~\cite{verma1988causal} that the concept of d-separation, when connected with conditional independence, completely characterizes the compatibility problem.
Specifically $p$ is compatible distribution for $\mathcal{G}$, iff for all disjoint subsets of vertices $X,Y,Z \subset N_{\mathcal{G}}$ we have $\dsep{X}{Y}{Z} \Leftrightarrow \cind{X}{Y}{Z}$, where $\pperp$ represent conditional independence.
We can denote the set of all compatible distributions with $\mathcal{G}$ as $\mathcal{C}_\mathcal{G}$.

Crucially, both the compatibility definitions described above are valid only if we assume that the latent variables behave like classical systems.
If we are free to identify the unobserved nodes as sources of quantum correlations, the set of quantum-compatible distributions is, for certain scenarios, strictly larger than the classical one.
Specifically, we can define quantum-compatible distribution for a DAG $\mathcal{G}$ as one described by the following strategy:
\begin{enumerate}
    \item To each latent variable $\Lambda \in L_{\mathcal{G}}$ we associate a quantum state described by the density operator $\rho_\Lambda \in \mathcal{L}(\mathcal{H}_\Lambda)$. 
    \item To each observed variable $A \in O_{\mathcal{G}}$ we associate a POVM measurement described by the operators $\left\{E_{\Pa^O(a)}^{a}\right\}_a$, dependent on the outcome of its observed parents $\Pa^O(a)$, and acting non-trivially only on the space $\mathcal{H}_{\Pa^L(A)}$ of all the latent parents of $A$. Measurement operators relative to different nodes should be pairwise commuting.
    \item The distribution is obtained by the Born rule applied to the state of all latent nodes :
    \begin{equation}
        p(O_{\mathcal{G}}) = \tr \left(\prod_{a \in O_{\mathcal{G}}} E_{\Pa^O(a)}^{a} \; \bigotimes_{\Lambda \in L_{\mathcal{G}}} \rho_\Lambda \right)
    \end{equation}
\end{enumerate}
Note that we are restricting to the case of exogenous latent variables.
We will denote the set of all quantum-compatible distributions for a DAG $\mathcal{G}$ as $\mathcal{Q}_\mathcal{G}$.  In this paper, we also consider resources that may not be compatible with a quantum description and are modeled by generalized probabilistic theories (GPT). A strategy similar to the one above can be used to define the set of correlations compatible with a causal structure $\mathcal{G}$ under scrutiny. Where we take $|\omega_{\Lambda})\in \Omega_{\Lambda}$, a GPT
generalization of the quantum state $\rho_{\Lambda}$, and $(e^{a}_{\Pa^{O}(a)}|$  GPT generalizations of the quantum measurement operator $E_{\Pa^O(a)}^{a}$ and the joint distribution over the observable variables is given by 
   \begin{equation}
        p(O_{\mathcal{G}}) =  \left(\prod_{a \in O_{\mathcal{G}}} (e_{\Pa^O(a)}^{a} \;| \circ  |\Omega)\right)
    \end{equation}
and $ |\Omega)$ lives in some composite state space $\prod_{\Lambda \in L_{\mathcal{G}}}\Omega_{\Lambda}$ that contains the tensor product of the states spaces as a subspace \cite{PhysRevA.81.062348}. We denote the set of all distributions compatible with a GPT as $\Omega_{\mathcal{G}}$.

Being interested in the \emph{causal compatibility problem} (CCP), it is convenient to restrict our attention to equivalence classes of DAGs that admit different sets of classically compatible distributions, that is we consider $\mathcal{G} \sim \mathcal{G'}$ belonging to the same class whenever $\mathcal{C}_\mathcal{G} = \mathcal{C}_\mathcal{G'}$.
Notice that while DAGs belonging to the same class need to have the same number of observable nodes, latent nodes are in no way restricted.
It can be shown \cite{Evans2016} that in the case of three observable nodes there are only eight different classes of DAGs, whose representatives are shown in Fig.~\ref{fig:nonlatent_dags} and Fig.~\ref{fig:all_dags}.
In particular in Fig.~\ref{fig:all_dags} are represented the three classes of DAGs containing latent variables, which are the only ones where a strict inclusion $\mathcal{C}_\mathcal{G} \subset \mathcal{Q}_\mathcal{G}$ is possible.
As previously mentioned, the Evan's scenario, depicted in Fig.~\ref{fig:all_dags_evans}, is the only one of them for this question remains an open problem.

\section{Evans Scenario}
\label{sec:evans}

\subsection{The relation between the bilocality and the Evans scenario}

In the following, we show an interesting relation between the Evans scenario to the so-called bilocal scenario~\cite{PhysRevLett.104.170401}.
This scenario, whose DAG $\mathcal{G}_B$ is displayed in Fig.~\ref{fig:bilocality_dag}, presents two independent sources $\Lambda$ and $M$ shared between a central node $B$ and two peripheral nodes $A$ and $C$, and can be described by a conditional probability distribution $p(a,b,c \vert x,z)$.
It is known~\cite{PhysRevLett.104.170401} that for this scenario the classically compatible set of distributions does not coincide with the quantum set, i.e. $\mathcal{C}_B \subset \mathcal{Q}_B$.

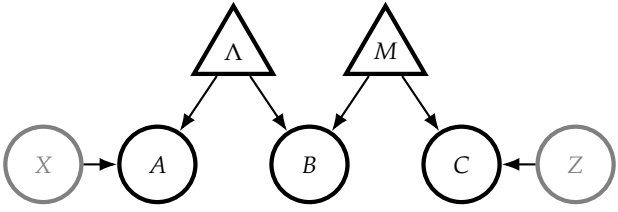
\begin{figure}
\centering
    \begin{tikzpicture}
        \node[latent] (l1) at (-1,1.5) {$\Lambda$};
        \node[latent] (l2) at (1,1.5) {$M$};
        \node[var] (a1) at (-2,0) {$A$};
        \node[var] (b) at (0,0) {$B$};
        \node[var] (a2) at (2,0) {$C$};
        \node[var, opacity=.5] (x1) at (-3.5, 0) {$X$};
        \node[var, opacity=.5] (x2) at (3.5, 0){$Z$};
        
        \path[dir] (x1) edge (a1) (x2) edge (a2);
        \path[dir] (l1) edge (a1) (l1) edge (b) (l2) edge (b) (l2) edge (a2);
    \end{tikzpicture}
     \caption{\textbf{ Causal structure of the bilocality scenario.} Two independent
     sources $\Lambda$ and $M$ connect the nodes $A$, $C$, which are influenced by $X$, $Z$ respectively, to $B$.}
\label{fig:bilocality_dag}
\end{figure}

A distribution compatible with the Evans scenario can be seen as a projection of a distribution in the bilocal scenario into the subspace $p(a,b,c|x=z=b)$. 
More precisely, for every distribution admitting a classical model $p_E(a,b,c)$ in the Evans scenario there exists a realization $p_B(a,b,c|x,z)$ in the bilocal scenario such that $p_E(a,b,c) = p_B(a,b,c|x=z=b)$ and $p_B(a,b,c|x,z)$ also admits a classical model. 
Conversely, for every bilocal distribution $p_B(a,b,c|x,z)$ its projection, $p_E(a,b,c) := p_B(a,b,c|x=z=b)$, admits classical model in Evans.

This simple fact allows us to derive some non-trivial conclusions about the topology of the set of classical correlations in Evans, $\mathcal{C}_e$, that follow from the topology of the bilocal set. We show, in analogy to \cite{PhysRevA.85.032119}, that the set $\mathcal{C}_e$ is connected and has weak star-convexity for a certain subspace. A similar relation is respected under the same projection for the case where quantum states are distributed in the network. The combination of these results shows that quantum non-bilocality is a necessary condition for a given distribution to display a classical-quantum gap in the Evans scenario.

To formalize those statements, let us consider the Evans set $\mathcal{C}_e$ and the bilocal set $\mathcal{C}_{B}$ of classical correlations. 
We know that for a distribution $p(abc|xz)\in \mathcal{C}_{B}$ the \emph{global Markov property} implies:
\begin{equation}
    p(abc|xz)=\sum_{\mu,\lambda}p(\mu)p(\lambda)p(a|x\lambda)p(b|\mu\lambda)p(c|z\mu) \; .
    \label{eq:evans_factorization}
\end{equation}

From this, assuming that $|X|=|Z|=|B|$, it follows
\begin{lemma}
\label{lemma:1}
If $p_B(abc|xz)\in\mathcal{C}_{B}$ then $p_E(abc):=p(abc|x=z=b)\in \mathcal{C}_e$. 
Conversely, if $p_E(abc)\in\mathcal{C}_e$ then it exists a $p_{B}(abc|xz)\in \mathcal{C}_{B}$ such that $p_{B}(abc|x=z=b)=p_E(abc)$.
\end{lemma}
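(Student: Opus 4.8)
The plan is to exploit the structural correspondence flagged just before the lemma: in the bilocal DAG $\mathcal{G}_B$ (Fig.~\ref{fig:bilocality_dag}) the exogenous settings $X$ and $Z$ enter the mechanisms of $A$ and $C$ in exactly the way the outcome of the mediator $B$ enters the mechanisms of $A$ and $C$ in the Evans DAG (Fig.~\ref{fig:all_dags_evans}). I would first write down the factorization that the global Markov condition~\eqref{eq:markov_condition} forces on any $p_E\in\mathcal{C}_e$,
\begin{equation}
    p_E(abc)=\sum_{\lambda,\mu}p(\lambda)p(\mu)\,p(b|\lambda\mu)\,p(a|b\lambda)\,p(c|b\mu),
\end{equation}
and set it next to~\eqref{eq:evans_factorization}. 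The two match response function by response function once the setting value of $X$ (resp.\ $Z$) is identified with the outcome value of $B$; the hypothesis $|X|=|Z|=|B|$ is precisely what makes this identification of alphabets legitimate.

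For the first implication, I would take $p_B\in\mathcal{C}_B$ in the form~\eqref{eq:evans_factorization} and evaluate it pointwise at $x=z=b$, obtaining
\begin{equation}
    p_E(abc)=\sum_{\lambda,\mu}p(\lambda)p(\mu)\,p(a|x{=}b,\lambda)\,p(b|\lambda\mu)\,p(c|z{=}b,\mu).
\end{equation}
Defining Evans mechanisms $q(a|b\lambda):=p(a|x{=}b,\lambda)$, $q(c|b\mu):=p(c|z{=}b,\mu)$ and $q(b|\lambda\mu):=p(b|\lambda\mu)$ over the same sources $p(\lambda),p(\mu)$ reproduces exactly the Evans factorization, so $p_E\in\mathcal{C}_e$. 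The only point to verify is that $p_E$ is a genuine normalized distribution: summing out $a$ and $c$ first collapses the two stochastic factors, $\sum_a q(a|b\lambda)=\sum_c q(c|b\mu)=1$, and what remains is $\sum_{\lambda,\mu}p(\lambda)p(\mu)\sum_b q(b|\lambda\mu)=1$. This collapse goes through precisely because $B$ has no observed parents in the Evans DAG.

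For the converse I would run the relabelling backwards: given an Evans model $\{p(\lambda),p(\mu),p(b|\lambda\mu),p(a|b\lambda),p(c|b\mu)\}$ realizing $p_E$, and using $|X|=|Z|=|B|$, set $p(a|x\lambda):=p(a|b{=}x,\lambda)$ and $p(c|z\mu):=p(c|b{=}z,\mu)$ while keeping $p(b|\lambda\mu)$ unchanged. These are legitimate conditional distributions, so the resulting $p_B(abc|xz)=\sum_{\lambda,\mu}p(\lambda)p(\mu)\,p(a|b{=}x,\lambda)\,p(b|\lambda\mu)\,p(c|b{=}z,\mu)$ lies in $\mathcal{C}_B$, and by construction $p_B(abc|x{=}b,z{=}b)=p_E(abc)$.

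I do not expect a genuine obstacle here: the lemma is essentially a reparametrisation of the latent-response functions, mediated by the coincidence of alphabets $|X|=|Z|=|B|$. The one place that genuinely needs care is the double role of $b$ in the first implication --- it is simultaneously an outcome of the projected distribution and the common value substituted for the settings $x,z$ --- so one must make sure the substitution is applied pointwise in $b$ and carry out the normalization check above, rather than mistaking $p_E$ for a sub-normalized diagonal slice of a two-parameter family. An equivalent route, which I would mention but not pursue, is to argue purely through the d-separation/conditional-independence characterization, verifying that the independences defining $\mathcal{C}_B$ descend, under $x=z=b$, to those defining $\mathcal{C}_e$; I would keep the explicit-model argument as the main line since it also yields the bilocal realization demanded by the converse.
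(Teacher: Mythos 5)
Your proposal is correct and follows essentially the same route as the paper: both directions are handled by relabelling the response functions via the identification of the alphabet of $B$ with those of $X$ and $Z$, with the forward direction being a direct substitution $x=z=b$ into the bilocal factorization and the converse being the reverse relabelling $p(a|x\lambda):=p(a|b{=}x,\lambda)$ (and likewise for $C$) to build an explicit bilocal model. Your additional normalization check for the projected distribution is a sensible piece of diligence that the paper leaves implicit.
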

\begin{proof}
The first implication of the lemma follows directly from the definitions. Suppose now that $p_E\in \mathcal{C}_e$ then there exists $\lambda$ and $\mu$ such that  
\begin{equation}
p(abc)=\sum_{\mu,\lambda}p(\mu)p(\lambda)p(a|b\lambda)p(b|\mu\lambda)p(c|b\mu).    
\end{equation}
Now we can define
\begin{equation}
\tilde{p}(a|\lambda,x):=p(a|\lambda,b=x),
\end{equation}
and similarly for $\tilde{p}(c|\mu,z)$.  We can also define
\begin{multline}
\tilde{p}(a,b,c|x,z) = \\ = \sum_{\mu,\lambda}p(\mu)p(\lambda)p(a|b'=x,\lambda)p(b|\mu\lambda)p(c|b'=z,\mu),  
\end{multline}
where $\tilde{p}$ is bilocal by construction and $\tilde{p}(a,b,c|x=z=b)=p(abc)$, which concludes the proof.
\end{proof}

Using this mapping we can extend some results valid for the set of bilocal distributions, to the Evans one.
In particular, we can prove that $\mathcal{C}_e$ is connected and star-convex for certain subspaces, i.e. there exists a preferential point $p^{*} \in \mathcal{C}^{p_A}_e \subset  \mathcal{C}_e$ such that for any
$p \in \mathcal{C}^{p_A}_e$ the line from $p^{*}$ to $p$ is contained in $\mathcal{C}^{p_A}_e$. Notice that star-convexity is a weaker notion of convexity.
\begin{lemma}
\label{lemma:2}
$\mathcal{C}_e$ is connected 
\end{lemma}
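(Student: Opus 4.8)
The plan is to prove the stronger statement that $\mathcal{C}_e$ is \emph{path-connected}, by constructing, for an arbitrary $p_E\in\mathcal{C}_e$, an explicit continuous curve lying entirely in $\mathcal{C}_e$ and joining $p_E$ to one fixed reference point --- the uniform distribution $u(abc)=1/(|A|\,|B|\,|C|)$. Since $p_E$ is arbitrary, this immediately gives connectedness.

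First I would use membership in $\mathcal{C}_e$ to fix a classical realization
\begin{equation}
    p_E(abc)=\sum_{\mu,\lambda}p(\mu)p(\lambda)\,p(a|b\lambda)\,p(b|\mu\lambda)\,p(c|b\mu),
\end{equation}
and then \emph{noise-damp} each response function toward the uniform one: for $t\in[0,1]$ set $p_t(a|b\lambda)=t\,p(a|b\lambda)+(1-t)/|A|$ and, analogously, $p_t(b|\mu\lambda)$ and $p_t(c|b\mu)$. Each such $p_t(\,\cdot\,)$ is a convex mixture of legitimate conditional distributions, hence legitimate; inserting them into the \emph{same} DAG --- keeping the latent variables $\mu,\lambda$ with their original marginals --- yields
\begin{equation}
    p_t(abc)=\sum_{\mu,\lambda}p(\mu)p(\lambda)\,p_t(a|b\lambda)\,p_t(b|\mu\lambda)\,p_t(c|b\mu),
\end{equation}
which by construction has exactly the classical Evans factorization (that of Eq.~(\ref{eq:evans_factorization}) with $x=z=b$), so $p_t\in\mathcal{C}_e$ for every $t\in[0,1]$. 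The map $t\mapsto p_t$ is polynomial in $t$, hence continuous; it satisfies $p_1=p_E$ and, because at $t=0$ all three response functions collapse to the uniform one while $\sum_{\mu,\lambda}p(\mu)p(\lambda)=1$, also $p_0=u$. This is the path we need. One could equivalently route the argument through Lemma~\ref{lemma:1}: damping only the $A$- and $C$-branches of the bilocal lift constructed there keeps that lift bilocal, so its $x=z=b$ projection stays in $\mathcal{C}_e$ --- the Evans counterpart of the connectedness argument for the bilocal set in Ref.~\cite{PhysRevA.85.032119}.

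I do not anticipate a real obstacle here; the only points to verify carefully are that the damped response functions remain nonnegative and normalized (immediate, as they are convex combinations) and that each $p_t$ is itself a normalized probability distribution (which follows by summing out $a$, then $c$, then $b$ in the factorization and using $\sum_{\mu,\lambda}p(\mu)p(\lambda)=1$). The one conceptual point worth stressing --- and the reason connectedness is established separately from, and before, the star-convexity result --- is that $t\mapsto p_t$ is a cubic curve, not the straight segment $t\,p_E+(1-t)u$; requiring $\mathcal{C}_e$ to be closed under such segments is a strictly stronger property, which (as the subsequent lemmas show) can only be obtained after restricting to a subspace with a fixed marginal.
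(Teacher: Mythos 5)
Your proof is correct, but it takes a genuinely different route from the paper's. The paper connects an arbitrary $p_E\in\mathcal{C}_e$ to the uniform distribution along the \emph{straight segment} $\xi p_E+(1-\xi)u$: it imports from Ref.~\cite{PhysRevA.85.032119} the claim that the mixture $\xi p_B+(1-\xi)p_0$ of a bilocal correlation with white noise is itself bilocal (being obtainable by local operations), and then pushes that segment through the projection $x=z=b$ of Lemma~\ref{lemma:1}. You instead damp the three response functions toward uniform inside a fixed Evans factorization, obtaining a cubic curve $t\mapsto p_t$ that manifestly stays in $\mathcal{C}_e$ and ends at the uniform point. What your version buys is self-containment (no appeal to Lemma~\ref{lemma:1} or to the bilocal result) and the avoidance of the stronger assertion that the straight segment to white noise lies in $\mathcal{C}_e$, which amounts to star-convexity about the uniform point; as you note, that is more than connectedness requires, and producing the \emph{exact} convex mixture with white noise by local operations is the delicate step (the obvious flag-bit constructions generate cross terms involving the non-uniform single-party marginals), which is consistent with the paper only claiming star-convexity on fixed-$p_A$ slices in Lemma~\ref{lemma:3}. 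One small imprecision in your parenthetical alternative: damping only the $A$- and $C$-branches of the bilocal lift lands at $u_A\otimes p_B\otimes u_C$ rather than at the uniform distribution, so that variant would also need the $B$-response damped (or a second leg of the path); your main construction already does this and is complete.
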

\begin{proof}
First, it was shown in \cite{PhysRevA.85.032119} that $\mathcal{C}_B$ is connected. More precisely, for every bilocal correlation $p_B$ there is a correlation $p_{\xi}=\xi p_B+(1-\xi)p_0$ that follows a continuous path connecting $p_B$ to $p_0$, where  $p_0(a,b,c|x,z)=\dfrac{1}{|A||B||C|}$ is the uniform distribution. Furthermore, $p_{\xi}$ is bilocal for every $\xi$ as it can be obtained by performing local operations on the correlation $p_B$. 

Now, consider $p_E\in\mathcal{C}_e$. Then by lemma 1 there exists $p_B$ bilocal that recovers $p_E$ via projection, we can thus define $$p^E_{\xi}(a,b,c)=p_{\xi}(a,b,c|x=z=b)=\xi p_E+(1-\xi)\dfrac{1}{|A||B||C|},$$since $p_{\xi}$ is bilocal then $p^E_{\xi}\in \mathcal{C}_e$. This shows every classical $p_E$ is continuously connected to the uniform distribution, implying that that $C_e$ is connected.
\end{proof}

Moreover, we have that star-convexity is preserved when we project on the subspace determined by fixing a specific marginal distribution for a $p_A(a)$.
\begin{lemma}
\label{lemma:3}
$\mathcal{C}_e$ is star-convex on the subspace of fixed $p_A$ denoted by $\mathcal{C}^{p_A}_e$.
\end{lemma}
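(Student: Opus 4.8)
The plan is to lift the star-convexity from the bilocal scenario to the Evans scenario using the projection map of Lemma~\ref{lemma:1}, mimicking the argument used for Lemma~\ref{lemma:2} but now tracking the constraint that the $A$-marginal is held fixed. First I would recall the relevant fact from \cite{PhysRevA.85.032119}: in the bilocal scenario, once one fixes the local response function $p(a|x,\lambda)$ at node $A$ (equivalently, once one fixes the behaviour of the peripheral party $A$), the resulting slice of $\mathcal{C}_B$ is star-convex with respect to a distinguished point, obtained by replacing the source $M$ and the response functions at $B$ and $C$ by trivial/uniform ones while keeping $A$'s part untouched. Concretely, for a bilocal $p_B$ with factorization as in Eq.~\eqref{eq:evans_factorization}, the candidate centre is $p_B^*(abc|xz) = \bigl(\sum_\lambda p(\lambda) p(a|x\lambda)\bigr)\cdot \frac{1}{|B|}\cdot\frac{1}{|C|}$, i.e. $p_A(a|x)$ times the uniform distribution on $b,c$; the segment $\xi p_B + (1-\xi) p_B^*$ stays bilocal because it is realized by locally mixing, at nodes $B$ and $C$, between the original response functions and the "erase-and-output-uniform" map, a convex operation that does not touch $A$ and hence preserves the $A$-marginal.

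Next I would push this through the projection $x=z=b$. Given $p_E \in \mathcal{C}_e^{p_A}$, Lemma~\ref{lemma:1} gives a bilocal preimage $p_B$ with $p_B(abc|x=z=b) = p_E(abc)$, and by construction (see the proof of Lemma~\ref{lemma:1}) this preimage has $A$-response function $\tilde p(a|\lambda,x) = p(a|\lambda, b=x)$, so its induced $A$-marginal is exactly $p_A$. Define $p_E^*(abc) := p_B^*(abc|x=z=b)$; since $p_B^*$ is bilocal, $p_E^*$ lies in $\mathcal{C}_e$, and since $p_B^*$ has the same $A$-part as $p_B$, one checks $p_E^*$ has marginal $p_A$, hence $p_E^* \in \mathcal{C}_e^{p_A}$. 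Crucially $p_E^*$ is \emph{independent of the choice of preimage}, because it only depends on $p_A$: explicitly $p_E^*(abc) = p_A(a)\cdot\frac{1}{|B||C|}$, so it is a single distinguished point of the whole slice $\mathcal{C}_e^{p_A}$. Finally, for the segment: $\xi p_B + (1-\xi) p_B^*$ is bilocal for all $\xi \in [0,1]$, so its projection $\xi p_E + (1-\xi) p_E^* = p_{\xi}^E$ is in $\mathcal{C}_e$ by Lemma~\ref{lemma:1}; and because the mixing was performed only at $B$ and $C$ it leaves the $A$-marginal at $p_A$, so in fact $p_\xi^E \in \mathcal{C}_e^{p_A}$. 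This exhibits the line from $p_E^*$ to any $p_E$ inside $\mathcal{C}_e^{p_A}$, which is the claim.

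The main obstacle I anticipate is not the topological lifting — that is essentially bookkeeping once Lemma~\ref{lemma:1} is in hand — but making precise the claim that the convex mixture used in the bilocal star-convexity argument genuinely corresponds to a \emph{local} operation at the non-$A$ nodes and therefore both (i) keeps the distribution inside $\mathcal{C}_B$ and (ii) does not disturb $p_A$. One has to be careful that, after the projection $x=z=b$, the variable $b$ simultaneously plays the role of a setting for $A$ and an outcome of $B$; the "erase at $B$" map must be defined so that the value of $b$ fed back into $A$'s response is still the genuine output of (the modified) $B$, which here is uniform — this is exactly why the distinguished point collapses to $p_A(a)/(|B||C|)$ and why no circularity arises. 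I would therefore spend most of the write-up verifying that $\xi p_B + (1-\xi)p_B^*$ admits the factorization of Eq.~\eqref{eq:evans_factorization} with an enlarged latent variable encoding the coin flip that selects, independently at $B$ and at $C$, between "use the original mechanism" and "output uniform", and that summing out $b,c$ (or marginalizing appropriately) returns $p_A$ unchanged.
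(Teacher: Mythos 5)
Your strategy---mix locally at $B$ and $C$ towards a product point, with the coin carried by the source they share---is in spirit the same as the paper's, which however works directly in the Evans scenario: it takes as centre the product distribution $p^*(a,b,c)=p_A(a)p^*(b)p^*(c)$, which manifestly lies in $\mathcal{C}^{p_A}_e$ and is the \emph{same} point for every $p$ in the slice, and realizes $p_\xi=\xi p+(1-\xi)p^*$ by giving Bob and Charlie an extra shared bit $\ell$ that selects between their original response functions and the ones generating $p^*$. Your detour through the bilocal lift breaks down at exactly the step you flag and then wave away. The projection of your bilocal centre $p_B^*(abc|xz)=p_A(a|x)\,u(b)\,u(c)$ at $x=z=b$ is $\bigl(\sum_\lambda p(\lambda)p(a|b,\lambda)\bigr)u(b)u(c)$, i.e.\ the \emph{interventional} distribution $p(a|\mathrm{do}(B=b))$ times the uniform one, not $p_A(a)/(|B||C|)$: the observational marginal is $p_A(a)=\sum_{b,\lambda}p(\lambda)p(a|b,\lambda)p(b|\lambda)$, which weights $b$ by its actual, $\lambda$-correlated distribution, while your centre weights it uniformly. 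Take $\Lambda$ trivial, Alice copying $b$, and Bob outputting $b=0$ deterministically: then $p_A=\delta_{a,0}$, but your projected centre is $\delta_{a,b}\,u(b)u(c)$, whose $A$-marginal is uniform, so the segment leaves $\mathcal{C}^{p_A}_e$ for every $\xi<1$.

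Two consequences follow. First, the identity $p_E^*(abc)=p_A(a)/(|B||C|)$ and the claim that mixing at $B$ and $C$ leaves $p_A$ untouched are false in the Evans scenario precisely because of the arrow $B\to A$: in the bilocal DAG tampering with $B$ cannot change $p(a|x)$, but after identifying $x$ with $B$'s output it does change $p(a)$, since $A$'s setting is now generated by the modified $B$. Second, your projected centre depends on Alice's response function $p(a|b,\lambda)$, hence on the choice of preimage and of classical decomposition; it is therefore not a single distinguished point of the slice, whereas star-convexity requires one centre serving all of $\mathcal{C}^{p_A}_e$. To salvage the argument you must, as the paper does, fix the centre $p_A(a)p^*(b)p^*(c)$ a priori and show that its mixture with any $p$ in the slice is classically realizable in Evans while the $A$-marginal stays at $p_A$; the bilocal projection alone does not deliver this.
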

\begin{proof}
We need to show there exists a point $p^{*}\in \mathcal{C}^{p_A}_e$ such that any line segment between $p\in \mathcal{C}^{p_A}_e $ and $p^{*}$ is inside $\mathcal{C}^{p_A}_e$.

Consider $p\in \mathcal{C}^{p_A}_e $, i.e.,  $p(a)=\sum_{bc}p(a,b,c)=p_A(a)$ is fixed and $p$ has a classical decomposition in terms of $\lambda$ and $\mu$. Define $p^{*}(a,b,c)=p_A(a)p^{*}(b)p^{*}(c)\in  \mathcal{C}^{p_A}_e $.

By providing an extra random bit $\ell$ distributed by the source with probability $p(\ell=1)=1-p(\ell=0)=\xi\in[0,1]$ to Bob and Charlie we can define the correlation $p_{\xi}$ as follows: if $\ell=1$, Bob and Charlie output their original response functions defined by $p$. If $\ell=0$ they respond according to $p^*$. This yields $p_{\xi}=\xi p +(1-\xi)p^* $. This operation can be done locally from Bob and Charlie labs and thus we can construct a local model for $p_{\xi}$, and $\sum_{bc}p_{\xi}(a,b,c)=p_{\xi}(a)=p_A(a)$. This shows $p_{\xi}\in \mathcal{C}^{p_A}_e $. 
\end{proof}

The proof of lemma 3 uses the exact same argument as Appendix A in \cite{PhysRevA.85.032119}. Lemmas 2 and 3 imply similarities between  the sets $\mathcal{C}_e$ and $\mathcal{C}_B$. Our next result shows there is a similar relationship at the quantum level.
The set of quantum correlations in the Evans scenario, $\mathcal{Q}_e$, and in the bilocal, $\mathcal{Q}_B$, are given respectively by 
\begin{equation*}
p_E(a,b,c)=\tr(\psi_{AB}\otimes\psi_{B'C}(E_{a|b}\otimes E_b \otimes E_{c|b}))\in \mathcal{Q}_e    
\end{equation*}
and 
\begin{equation*}
    p_B(a,b,c|x,z)=\tr(\psi_{AB}\otimes\psi_{B'C}(E_{a|x}\otimes E_b \otimes E_{c|z}))\in \mathcal{Q}_B
\end{equation*}
The systems are independently  distributed and $\{E_{i|j}\}$ are general POVM's respecting 
$$\sum_i E_{i|j}=1 \quad \forall j \quad E_{i|j}\geq0 \quad \forall i,j.  $$

\begin{lemma}
If $p_B(abc|xz)\in\mathcal{Q}_{B}\implies p_E(abc):=p(abc|x=z=b)\in \mathcal{Q}_e$. Conversely,
if $p_E(abc)\in\mathcal{Q}_e\implies \exists p_{B}(abc|xz)\in \mathcal{Q}_{B}$ s.t. $p_{B}(abc|x=z=b)=p_E(abc)$
\end{lemma}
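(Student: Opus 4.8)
The plan is to mirror the relabelling argument used in the proof of Lemma~\ref{lemma:1}, now carried out at the level of states and measurement operators rather than response functions. The key structural observation is that the Evans DAG and the bilocal DAG differ only in how the measurement at node $A$ (resp. $C$) is indexed: in the bilocal scenario it is indexed by the external setting $x$ (resp. $z$), whereas in Evans it is indexed by the outcome $b$ of the central node $B$. In both cases the measurement of $A$ acts non-trivially only on $\mathcal{H}_\Lambda$, that of $C$ only on $\mathcal{H}_M$, and that of $B$ on $\mathcal{H}_\Lambda\otimes\mathcal{H}_M$; and in both cases the three operator families pairwise commute, since they are supported on distinct tensor factors of $\psi_{AB}\otimes\psi_{B'C}$. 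Since we assume $|X|=|Z|=|B|$, the relevant index sets match and the relabelling below is well defined.

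For the forward implication, take a quantum bilocal realization of $p_B(abc|xz)$ given by states $\psi_{AB},\psi_{B'C}$ and POVMs $\{E_{a|x}\}_a$, $\{E_b\}_b$, $\{E_{c|z}\}_c$. Define $E_{a|b}:=E_{a|x=b}$ and $E_{c|b}:=E_{c|z=b}$, keeping the same central POVM $\{E_b\}_b$ and the same states. Each $\{E_{a|b}\}_a$ is still a POVM for every value of $b$ (positivity and normalisation $\sum_a E_{a|b}=\Id$ are untouched by renaming the classical index), so this is a legitimate quantum Evans strategy, and by construction
\begin{equation*}
\tr\big(\psi_{AB}\otimes\psi_{B'C}\,(E_{a|b}\otimes E_b\otimes E_{c|b})\big)=p_B(abc|x=z=b)=p_E(abc),
\end{equation*}
hence $p_E\in\mathcal{Q}_e$. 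For the converse, start from a quantum Evans realization of $p_E$ with POVMs $\{E_{a|b}\}_a$, $\{E_b\}_b$, $\{E_{c|b}\}_c$ and states $\psi_{AB},\psi_{B'C}$, and set $E_{a|x}:=E_{a|b=x}$, $E_{c|z}:=E_{c|b=z}$, keeping $\{E_b\}_b$ and the states unchanged. This defines a valid quantum bilocal strategy producing
\begin{equation*}
p_B(abc|xz):=\tr\big(\psi_{AB}\otimes\psi_{B'C}\,(E_{a|b=x}\otimes E_b\otimes E_{c|b=z})\big)\in\mathcal{Q}_B,
\end{equation*}
and evaluating at $x=z=b$ returns exactly $p_E(abc)$, which is the claim.

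The only point requiring genuine care — and the closest thing to an obstacle — is verifying that \emph{being a quantum strategy} is truly invariant under this index surgery: one must check that the operator ordering/commutation structure appearing in the Born-rule expression is preserved (it is, since the supports of the three operator families are fixed and disjoint), and that no constraint in the definition of $\mathcal{Q}_B$ (resp. $\mathcal{Q}_e$) is lost when a setting index is reinterpreted as an outcome index, or vice versa (it is not, since both definitions impose precisely the same per-index POVM constraints and use exogenous, independently distributed latent states $\psi_{AB}\otimes\psi_{B'C}$). Everything else is bookkeeping, and the identical argument goes through verbatim for the GPT sets $\Omega_B$ and $\Omega_e$ if desired.
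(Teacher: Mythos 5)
Your proof is correct and follows essentially the same route as the paper's: both establish the equivalence by the index relabelling $E_{a|x=k}\equiv E_{a|b=k}$ (and similarly for $C$), keeping the states and the central POVM fixed. The extra checks you include (preservation of the POVM conditions and of the commutation structure under the relabelling) are implicit in the paper's shorter argument but add nothing beyond bookkeeping.
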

\begin{proof}
This is a consequence of making the identification $E_{a|x=k}\equiv E_{a|b=k}$, and similar for other effects. 

Indeed, if $p_B(abc|xz)\in\mathcal{Q}_{B}$ then, 
\begin{equation*}
\begin{aligned}
p_E(abc) &:=p(abc|x=z=b)=\\
&=\tr(\psi_{AB}\otimes\psi_{B'C}(E_{a|b}\otimes E_b \otimes E_{c|b}))\in \mathcal{Q}_e.
\end{aligned}
\end{equation*}
Conversely, if $p_E\in \mathcal{Q}_e$ we define $\tilde{E}_{a|x}:= E_{a|b=x}$, and similar for other parties, which gives 
\begin{equation*}
    \tilde{p}(a,b,c|x,z)=\tr(\psi_{AB}\otimes\psi_{B'C}(E_{a|b'=x}\otimes E_b \otimes E_{c|b'=z}))\in \mathcal{Q}_B
\end{equation*}
\end{proof}
It is important to point out that we have proven quantum non-bilocality to be a necessary condition for the Evans scenario quantum gap. 

Indeed, suppose $p_E$ is a quantum non-classical distribution in the Evans scenario. By lemma 4 there exists $p_B\in \mathcal{Q}_B$ that recovers $p_E$ via projection. The distribution $p_B$ is not unique, i.e. the quantified problem 
\begin{equation}
    \exists p_B \in \mathcal{Q_B} \text{ s.t. } p_B(a,b,c|x=z=b)=p_E(a,b,c),
\end{equation}
may have many (infinitely many) solutions. However, there cannot be any solution $p_B\in\mathcal{C}_B$, because if $\exists p_B\in\mathcal{C}_B$ then $p_E\in\mathcal{C}_e$ by lemma 1. A contradiction.

These relations hint at the possibility that starting with some non-classical distribution in the bilocal scenario one could possibly derive the incompatibility of the projected correlation in the Evans case, i.e., it is natural to ask whether non-bilocality could become also sufficient under some specific condition or some specific distribution. 
Unfortunately, we could not find quantum violations of bilocality that remained non-classical after projection. 
We believe that this is due to the fact that many examples of non-bilocality rely on entanglement swapping processes that are heavily dependent on maximally entangled measurements for Bob, which, as we will prove in the next section, do have a classical explanation in the Evans case.

As shown by these negative results, truly new methods are required to analyze the emergence of non-classicality in the Evans scenario.

\subsection{A classical model for measurements on a maximally entangled basis}
\label{sec: classical_model-entangled_meas}
Given the similarities between the bilocal and the Evans scenario that we highlighted in the previous section, a natural quantum strategy to obtain nonclassical correlations in the Evans scenario would be to start from maximally entangled bipartite states shared by $A,B$ and $B,C$ and use node $B$ to perform entanglement-swapping~\cite{zukowski1993event}, so that $A$ and $C$ would effectively perform measurements on a shared entangled state.
Unfortunately, we will show here that this kind of strategy possesses a classical model in the Evans scenario.

To make things more precise let us assume that $A,B$ and $B,C$ both start with maximally entangled states of the form $\ket{\Phi_d} = \frac{1}{\sqrt{d}}\sum_i^d \ket{ii}$, where $B$ performs the standard $d$-dimensional entangling measurement, in the basis $B_d = \left\{\ket{\Phi_d^{n,m}}\right\}_{m,n}$
\begin{equation}
    \ket{\Phi_d^{n,m}} = \frac{1}{\sqrt{d}}\sum_k^d e^{i 2\pi nk/d} \ket{k,k + m}
    \label{eq:ent_measurement}
\end{equation}
where the sum is modulo $d$ and $n,m \in \{0, \ldots, d-1\}$.
Then $A$ and $C$ can perform an arbitrary POVM measurement on their part $\{A_b^a\}_a$ and $\{C_b^c\}_c$ respectively, depending on the outcome $b = (n,m)$ of $B$.
The distribution is given by
\begin{equation}
    p(a,b,c) = \tr \left( A_b^a \otimes B^b \otimes C_b^c \,
        \ket{\Phi_d}\bra{\Phi_d} \otimes \ket{\Phi_d}\bra{\Phi_d} \right )
    \label{eq:max_ent_st}
\end{equation}

\begin{result}
For any POVM $\{A_b^a\}_a$ and $\{C_b^c\}_c$ and for any dimension $d$, the distribution generated by the quantum strategy in equation \eqref{eq:max_ent_st} always has a classical realization compatible with the Evans causal structure.
\end{result}
\begin{proof}
We can start by noticing that independently of the value of the outcome $b$, we can always describe the effective state shared between $A$ and $C$ as the maximally entangled state $\ket{\Phi_d}$.
Indeed, since depending on $b$, the effectively swapped state will be one of the set $\{\ket{\Phi_d^{n,m}}\}$, we can reduce it to $\ket{\Phi_d}$ by applying a local unitary on one part. Specifically, we can repeatedly apply the operators $X, Z$, defined by
\begin{equation}
    X \ket{k} = \ket{k + 1}, \quad Z \ket{k} = e^{i2\pi k/d} \ket{k} \; .
\end{equation}
In this way, redefining 
\begin{equation}
    \tilde C_{(n,m)}^c = Z^{-n} X^{-m} C_{(n,m)}^c X^m Z^n
\end{equation}
we obtain, conditioning on the outcome $b$
\begin{equation}
        p(a,c|b) = \tr \left( A_b^a \otimes \tilde C_b^c \, \ket{\Phi_d}\bra{\Phi_d} \right)
        = \tr \left( (A_b^a)^\T \tilde C_b^c \right)/d \; .
        \label{eq:proof_max_ent_pacb}
\end{equation}

The distribution $p(a,c|b)$, as written above, is also compatible with a structure where variables $A$ and $C$ share a single latent variable $N$, and are both influenced by the same setting $B$ (independent of $N$).
In such a structure the sets of quantum and classical distribution sets coincide, so we know that in particular for \eqref{eq:proof_max_ent_pacb}, we can have a classical realization of the form $p(a,c|b) = \sum_\nu p(a|b,\nu) p(c|b,\nu) p(\nu)$, where $\nu \in \{0,\ldots,d-1\}^2$.
Without loss of generality, we can limit ourselves to combinations of deterministic strategies:
\begin{equation}
    p(a,c|b) = \sum_{\nu} \delta_{a, f(b,\nu)} \delta_{c, g(b,\nu)} p(\nu) \, .
    \label{eq:proof_max_ent_pacb_nu}
\end{equation}
We will now show that these can be easily converted into classical realizations compatible with the Evans scenario.
Indeed for a classical distribution in the Evans case, we would have 
\begin{equation}
    p(a,c|b) = \sum_{\lambda, \mu} \delta_{a, f'(b,\lambda)} \delta_{c, g'(b,\mu)} p(\lambda, \mu | b),
    \label{eq:proof_max_ent_pacb_evans}
\end{equation}
where the term $p(\lambda, \mu | b)$ needs to satisfy  $\sum_b p(\lambda, \mu | b) p(b) = p(\lambda) p(\mu)$.
We notice that in our case we have $p(b)$ a uniform distribution, and if we also choose $p(\lambda)$ and $p(\mu)$ to be uniform we have that $p(\lambda, \mu | b) = p(b | \lambda, \mu)$
so that the above condition becomes simply the normalization of $p(b | \lambda, \mu)$.
Now, if $p(\nu)$ is the distribution for the shared latent variable $N$ that describes \eqref{eq:proof_max_ent_pacb_nu} 
, we define
\begin{equation}
    p(b | \lambda, \mu) = p(\nu = (\lambda + b_1, \mu + b_2)),
\end{equation}
where $b = (b_1, b_2)$ and the addition is modulo $d$.
We can easily see that, thanks to this definition, both $p(b | \lambda, \mu)$ and $p(\lambda, \mu | b)$ satisfy the normalization condition.
Now, to make \eqref{eq:proof_max_ent_pacb_evans} coincide with \eqref{eq:proof_max_ent_pacb_nu}, only remains to define the local deterministic strategy of the former as $f'(\lambda, b) = f(\lambda - b_1, b)$ and $g'(\mu, b) = g(\mu - b_2, b)$.
\end{proof}
It's important to notice that our result is valid regardless of the cardinality of the variables $A$ and $C$. It was shown in \cite{bonet2013instrumentality} that, in the instrumental scenario, when the central node (in our case $B$) has a high cardinality the scenario becomes less restrictive such that in the asymptotic limit of continuous $B$ the model imposes no restriction for $p(a,b,c)$. Since the Evans model contains the instrumental model as a particular case (as will be discussed in more detail in \ref{sec:InstrumentalIneqforEvans}), we can conclude that this will also be the case for the Evans model. This means that whenever $|B|\geq |A|,|C|$ the model will be quite permissive and we can, for example, simulate perfect correlation between the variables trivially.
\section{Non-convex quadratic problems and causal compatibility}
\label{sec:quad_prob}

In our framework, non-convex optimization problems arise in a very natural way: to solve the problem of deciding whether some correlation $p$ lies inside or outside the non-convex set of classical correlations. It has been proven in~\cite{Rosset2018UniversalBO} that the classical set of correlations, for any network structure, is a semi-algebraic set, i.e. it can be described by finitely many polynomial equalities and/or inequalities on the joint distribution of all observable nodes of the network. Thus, in principle, we could solve the CCP by globally solving a \emph{polynomial optimization problem} (POP). Of course, in general, this is no easy task and there are approaches to try to solve such problems asymptotically as a hierarchy of simpler outer approximations like, for example, semi-definite programming relaxations~\cite{Lasserre2001GlobalOW} or  
Linear Programming (LP)~\cite{NavascuesWolfe+2020+70+91}.

Reformulating the Evans compatibility problem 
as a non-convex maximization problem, we get that $p(a,b,c)$ is compatible with \ref{fig:all_dags_evans} if and only if 
\begin{equation}\label{eq: Fine theorem}
\begin{aligned}
    \exists &q(a_0,...,a_{|B|}, b, c_0,...,c_{|B|}) \geq 0\\
    &\st\\
    &\sum_{a_{i\neq b},c_{j\neq b} }q(a_0,...,a_{|B|}, b, c_0,...,c_{|B|})=p(a,b,c),\\
    & q(a_0,...,a_{|B|}, c_0,...,c_{|B|})=q(a_0,...,a_{|B|})q( c_0,...,c_{|B|}),\\
    &\sum_{a_{0},...,a_{|B|},b,c_{0},...,c_{|B|}}q(a_0,...,a_{|B|}, b, c_0,...,c_{|B|})=1,\\
   \end{aligned}
\end{equation}
where 
\begin{equation}
\begin{aligned}
   &q(a_0,...,a_{|B|}, b, c_0,...,c_{|B|}):=\sum_{\lambda,\mu}p(\lambda)p(\mu)p(b|\lambda,\mu)\times \\
   &\times p(a_0|\lambda,b=0)...p(a_{|B|}|\lambda ,b=|B|) p(c_0|\mu,b=0)...\\
   &...p(c_{|B|}|\mu ,b=|B|).
\end{aligned}
\end{equation}
This problem is non-convex, with all the conditions being at most quadratic. We also can see in~\ref{eq: Fine theorem} analogous conditions to Fine's theorem~\cite{fine1982hidden} for this network, i.e. necessary and sufficient conditions on the joint probability distribution of all the possible outcomes for causal compatibility. This clearly constitutes a non-convex quadratically constrained feasibility test that can be globally solved with presently available solvers.
Notice that, feasibility tests are the particular case of a constrained optimization problem, that is, when the objective function is constant. 
For our work, we have used the Gurobi optimizer~\cite{gurobi}, which allows us to tackle non-convex quadratically constrained quadratic problems (QCQP).
Contrary to many non-linear optimization solvers, which search for locally optimal solutions, here we solve this problem by looking for global optimality

\footnote{Gurobi supports constraints containing bilinear terms like $x\cdot y$. Although it does not directly support constraints containing more general multilinear terms, they can be modeled using a series of bilinear constraints, with the help of auxiliary variables. Such higher-degree polynomial problems will face numerical issues due to computational precision. }.

Furthermore, we can rewrite this as a minimization problem introducing white noise to $p(a,b,c)$:

\begin{equation}\label{op:OptimizationGurobi}
\begin{aligned}
    &\minimize_{v,q} \quad  1-v\\
    &\st\\
    &\sum_{a_{i\neq b},c_{j\neq b} }q(a_0,...,a_{|B|}, b, c_0,...,c_{|B|})=vp(a,b,c)+\frac{1-v}{|A||B||C|},\\
    & \quad \quad q(a_0,...,a_{|B|}, c_0,...,c_{|B|})=q(a_0,...,a_{|B|})q( c_0,...,c_{|B|}),\\
    &\sum_{a_{0},...,a_{|B|},b,c_{0},...,c_{|B|}}q(a_0,...,a_{|B|}, b, c_0,...,c_{|B|})=1,\\
    & \quad \quad q\geq 0.
   \end{aligned}
\end{equation}

Notice that this choice of noise is validated by lemma~\ref{lemma:2}, where it is shown that all Evans-compatible behaviors are continuously connected to the white noise distribution and, thus, the problem must yield a solution with $v<1$ if $p$ lies outside the classical set of correlations.  
Although the derivation of~\ref{eq: Fine theorem} does not work in general for networks with a more complex latent structure, e.g. triangle network, our tools may still tackle the CCP for these more general causal structures.
Truly, in~\cite{Rosset2018UniversalBO} the authors show that one may always take the cardinality of the latent variables to be finite and, by consequence, we can use the d-separation criterion to formulate a POP. This shows that we can use global non-convex optimization as a necessary and sufficient test, up to computational tolerance, for causal compatibility.

\subsection{Extracting infeasibility certificates from quadratic problems}
\label{sec:Extractinginfeasibility}
If a given probability distribution $p^*(a,b,c)$ fails the CCP test, i.e. is incompatible with the classical causal model under test, we would also like to extract a witness for such non-classical behavior, that is, a real function $F$ such that $F(p(a,b,c))\geq 0$ and $F(p^*)<0$ where $p(a,b,c)$ are all distributions compatible with the causal structure under scrutiny. 

In the Evans scenario, for example, the compatibility conditions are stated in~\ref{eq: Fine theorem}. Since any compatible $p(a,b,c)$ can be expressed as a marginalization of the joint distribution $q$, we can write $F=F(q)$, where the terms that will appear are the respective marginals of $q$ optimized over the factorizing conditions, normalization and non-negativity. 
Given an incompatible distribution $p^*$, we may simply choose $F(p)=||p-p^*||^2$, since all terms of $F$ are quadratic on $q$ and this can be efficiently optimized. 
Notice that the value $F(p)=0$ is never possible, as this would imply $p=p_*$, and $p^*$ is assumed to be a non-feasible point. Therefore, the program will return a bound $B^*$ that is tight up to computational precision. Our witness will then be written as $$F(p)\geq B^*$$ and can be violated by $p_*$.

\subsection{Finding inequalities for Evans starting from the instrumental scenario}

\label{sec:InstrumentalIneqforEvans}

The instrumental scenario, whose DAG is shown in Fig.~\ref{fig:all_dags_instrumental}, implies the following decomposition for any compatible distribution
\begin{equation}
p(a,b,c) = \sum_{\mu}p(a) p(b\vert a,\mu)p(c\vert b,\mu)p(\mu). 
\end{equation}
We notice that this can be regarded as a particular case of the factorization~\eqref{eq:evans_factorization} given by Evans scenario.

The correspondence emerges when we drop the arrow from $B$ to $A$ and choose $A$ to be a deterministic function of $\lambda$, that is,  $p(a \vert b, \lambda)= \delta_{a,\lambda}$. 
More generally, the arrow $B \rightarrow A$ can be seen as a kind of measurement dependence in the instrumental scenario \cite{miklin2022causal}, since via this arrow the hidden variable $\mu$ can influence the variable $A$. Because of that, it seems reasonable that inequalities valid for the instrumental scenario can be recycled for Evans if we consider the relaxation arising from the arrow $B \rightarrow A$. 
In the following, we will show how the bounds of the instrumental inequalities change if applied to the Evans scenario.

Indeed, using the Gurobi optimizer, we can evaluate how the bound of known Bell inequalities change when applied to Evans. 
Specifically, we consider Pearl's inequality \cite{pearl1995testability}
\begin{align}\label{eq:binary_instrumental_ineqs2}
    P = &p_{ABC}(1,0,0)p_A(1)+p_{ABC}(1,1,1)p_A(0)\\
    &- p_A(0)p_A(1) \leq 0, \nonumber
\end{align}
and Bonet's inequality \cite{bonet2013instrumentality, chaves2018quantum}
\begin{align}
\label{eq:Bonet2} 
    \nonumber B = &p_{ABC}(0,1,0)p_A(1)p_A(2)-p_{ABC}(1,1,0)p_A(0)p_A(2)\\
    \nonumber &-p_{ABC}(1,1,1)p_A(0)p_A(2)-p_{ABC}(2,0,1)p_A(0)p_A(1)\\  
    &-p_{ABC}(2,1,0)p_A(0)p_A(1) \leq 0.
\end{align}

Employing Gurobi, one can see that the inequalities are changed as
\begin{eqnarray}
    &P \leq 1/16, \\ \nonumber
    &B \leq 1/27.
\end{eqnarray}
Unfortunately, however, we were unable to find any quantum or post-quantum correlations able to violate these inequalities.
.

Nonetheless, this relation between Evans and the instrumental also hints at the possibility that correlations violating instrumental inequalities might also lead to non-classical behavior in the Evans scenario.  As will be detailed in the following, this is precisely the case for the PR-box in the instrumental scenario with $|A|=3$, appropriately choosing the marginal distribution of $p(a)$. We also show in Appendix \ref{app: Classical_PR-Box} how this choice cannot be arbitrary since there are choices for $p(a)$ such that the distribution still admits a local model.

\subsection{Violations of the Evans' scenario} 

Now we show that the Bonet PR-Box given by  
\begin{equation}
    p_{PR}(bc|a)=
    \begin{cases}
    1/2 \quad \textrm{ if $c=b+f(a,b)$} \mod{2}\\
    0\quad  \textrm{ otherwise}
    \end{cases},
\label{eq:NS_Box}
\end{equation}
where $f(0, b) = 0$, $f(1, b) = b$, and $f(2, b) = b +1 \mod{2}$, can violate causal compatibility inequalities in the corresponding Evans scenario.

Solving the QCQP, we have a necessary and sufficient (up to computational precision) oracle to detect non-classically in the Evans scenario.
By discretizing our domain of parameters $p_A$, and employing this oracle for each choice of $p_A$, the lowest visibility overall choices we could certify the non-classicality of the distribution in~\eqref{eq:NS_Box} is given by $v=0.84$. With $p_A(0)=p_A(2)=10/21$ and  $p_A(1)=1/21$, where the distribution under test is the convex sum $p(a,b,c)=vp_{PR}(a,b,c)+(1-v)1/12$, that is, a mixture of white noise with the PR-box wired by asymmetrically distributed inputs. 

Using the construction in Sec. \ref{sec:Extractinginfeasibility} we could, in theory, always find a non-linear witness of non-classicality which is generally valid for the Evans scenario. However, in practice, this approach may face numerical difficulties in regions near critical visibilities. This fact becomes more intuitive when we look at this approach in a geometrical manner, since we are looking at the largest euclidean ball centered at the nonclassical correlation $p$ that remains completely outside of the non-convex (classical) set of correlations. Clearly, this is a non-convex optimization problem and, furthermore, if we get arbitrarily close to the border of the set from outside, the radius of this ball becomes arbitrarily small, and if this radius is smaller than our computational precision we have a problem. Due to these difficulties, we chose to look to the largest radius of this euclidean ball, which would correspond to the case $v=1$. After some simplification of the terms, we find
\begin{equation}
\begin{aligned}
&GW=p^2_{ABC}(000) - \dfrac{10}{21}p_{ABC}(000) + p^2_{ABC}(001)+\\
& p^2_{ABC}(010) + p^2_{ABC}(011) - \dfrac{10}{21}p_{ABC}(011) + p^2_{ABC}(100)+ \\
&p^2_{ABC}(101)- \dfrac{1}{21}p_{ABC}(100)  + p^2_{ABC}(110) - \dfrac{1}{21}p_{ABC}(110) \\
&+ p^2_{ABC}(111) + p^2_{ABC}(200) + p^2_{ABC}(201) - \dfrac{10}{21}p_{ABC}(201)+\\
& p^2_{ABC}(210) + p^2_{ABC}(211) - \dfrac{10}{21}p_{ABC}(211) +0.22728 \geq 0. 
\end{aligned}
\end{equation}
To obtain this witness, we optimized the objective function $F(p)=||p-p_{PR}||_2^2$ where $p_{PR}$ is the PR-box distribution mentioned before with the appropriate choice of $p_A$, the optimal violation is the case where $p=p_{PR}$ which we get $GW(p_{PR})=-0.00061$. We can also always show with our approach that this inequality is tight, i.e., gurobi program can always return a classical point that achieves the optimal objective function. More generally, this can be seen as a consequence of the Weierstrass extreme value theorem which guarantees that the continuous function $F(p): \mathcal{C}_e\rightarrow \mathbb{R}$ can take its extreme values with points inside the compact set of classical correlations $\mathcal{C}_e$.

\section{Deriving a witness from the Inflation technique augmented with E-separation}

\label{sec: e-sep+inflation technique}
The inflation technique~\cite{WolfeSpekkensFritz+2019} is an important tool that allows to set constraints on the correlations that can arise in any network (up to computational complexity issues). 
Intuitively, we are concerned with the hypothetical situation where one has access to multiple copies of the sources and measurement devices that compose the network and can rearrange them in different configurations. 
Its core idea is to explore simple (linear) conditions of this inflated network that ultimately translate to polynomial inequalities on the observable probabilities. 
It has been proven in~\cite{NavascuesWolfe+2020+70+91} the existence of a hierarchy of inflations that asymptotically converges to the classical set of correlations of any network and a test of compatibility of a given level of this hierarchy can be done via Linear Programming (LP)~\cite{BoydVandenberghe}. 
However, for each level $n$ of this hierarchy the memory resources required are superexponential on n. 
Here we will show how we can use e-separation to drastically decrease the number of variables for each compatibility test and prove the Bonet PR-Box incompatibility.

In parallel to the concept of Inflation, the concept of e-separation (extended d-separation) was introduced in~\cite{Evans2012}. 
Let $\mathbf{A}$,$\mathbf{B}$, $\mathbf{C}$ and $\mathbf{D}$ be disjoint sets of observable variables in a DAG $\mathcal{G}$.
We say the variables $\mathbf{A}$ and $\mathbf{B}$ are e-separated given $\mathbf{C}$ after deletion of $\mathbf{D}$ iff  $\mathbf{A}$ and $\mathbf{B}$ are d-separated by $\mathbf{C}$ in the DAG $\mathcal{G}^*$, which is the resulting DAG after removal of all the vertices in $\mathbf{D}$ from $\mathcal{G}$. If $\mathbf{D}=\emptyset$, e-separation recovers the notion of d-separation. 

The e-separation criterion is related to the idea of splitting nodes in a graph. We can define a node-splitting operation as follows. Given a graph $\mathcal{G}$ and a vertex $\mathbf{D}$ in the graph, the node splitting operation returns a new graph $\mathcal{G}^{\#}$ in which $\mathbf{D}$ is split into two vertices. One of the vertices, called $\mathbf{D}$, maintains all its causal parents in the original graph $\mathcal{G}$, thus having the same distribution as $\mathbf{D}$ in $\mathcal{G}$, but none of its outgoing edges.  
The other one instead, labeled $\mathbf{D}^{\#}$, will inherit all of the outgoing edges of $\mathbf{D}$ in the original graph, but none of its incoming ones. 
An example of the node splitting operation is illustrated in Fig.~\ref{fig:e-sep}. 
In~\cite{Evans2012} it was proven that $\mathbf{A}$ and $\mathbf{B}$ are e-separated given $\mathbf{C}$ after deletion of $\mathbf{D}$ iff $\mathbf{A}$ and $\mathbf{B}$ are d-separated by $\mathbf{C}$ and $\mathbf{D}^{\#}$ in $\mathcal{G}^{\#}$.

In the Evans scenario, this would be the case if we identify the nodes $\mathbf{A}=A$, $\mathbf{B}=C$, $\mathbf{C}=\emptyset$ and $\mathbf{D}=B$ (see Fig.~\ref{fig:e-sep}). 
The node splitting operation defines a set of conditions that follow from the fact that all nodes are copies of the node in the original DAG. 
Namely, for any probability distribution $p(a,b,c)$ compatible with Evans scenario there exists a probability distribution $q(a,b,c|b^{\#})$ compatible with the DAG on the right of Fig.~\ref{fig:e-sep} respecting
\begin{equation}
    q(a,b,c|b^{\#}=b)=p(a,b,c).
\end{equation}
After performing the node splitting operation we use the inflation technique in the resulting DAG. 
For simplicity let us show an example with a second-order inflation. 

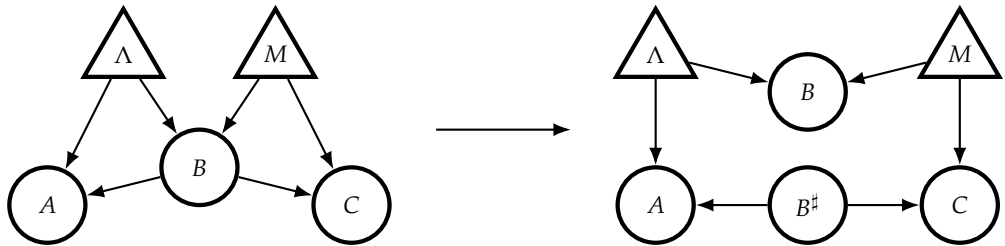
\begin{figure*}
\begin{tikzpicture}
    \node[var] (a) at (-2,0) {$A$};
    \node[var] (c) at (2,0) {$C$};
    \node[latent] (l) at (-1,2) {$\Lambda$};
    \node[latent] (g) at (1,2) {$M$};
    \node[var] (b) at (0,.5) {$B$};
    \path[dir] (l) edge (a) (l) edge (b); 
    \path[dir] (g) edge (b) (g) edge (c); 
    \path[dir] (b) edge (a) (b) edge (c);

	\node[var] (as) at (6,0) {$A$};
    \node[var] (cs) at (10,0) {$C$};
    \node[latent] (ls) at (6,2) {$\Lambda$};
    \node[latent] (gs) at (10,2) {$M$};
    \node[var] (bs) at (8,1.5) {$B$};
    \node[var] (bss) at (8,0) {$B^\sharp$};
    \path[dir] (ls) edge (as) (ls) edge (bs); 
    \path[dir] (gs) edge (bs) (gs) edge (cs); 
    \path[dir] (bss) edge (as) (bss) edge (cs);

	\node (f1) at (3,1) {};
	\node (f2) at (5,1) {};
    \path[dir] (f1) edge (f2);
\end{tikzpicture}
\caption{\textbf{Node splitting operation.} In the node splitting operation the DAG $\mathcal{G}$ on the left, in this case representing the Evans scenario, is associated with the DAG $\mathcal{G}^\#$ on the right, where one of the nodes, the node $B$, has been split into two nodes $B$ and $B^\#$ retaining respectively the incoming and outgoing edges of the original node.
The DAG $\mathcal{G}^\#$ is useful to visualize the e-separation condition between the nodes $A$ and $C$ deleting $B$, which here corresponds to the d-separation condition $\dsep{A}{C}{B^\#}$.}
\label{fig:e-sep}
\end{figure*}

Considering two independent copies of each latent variable $\lambda_i$, $\mu_i$, $B_i^{\#}$ for $i=1,2$ and $A_i=A(\lambda_i,B^{\#}_i)$, $C_j=C(\mu_j,B^{\#}_j) $ and $B_{ij}=B(\lambda_i,\mu_j)$, give us a joint distribution on all observed variables, since the $B^{\#}$ variables are exogenous we can interpret them as inputs. 
Suppose that $q$ is compatible with the DAG on the right of Fig~\ref{fig:e-sep}, i.e. 
\begin{equation}
    q(a,b,c|b^{\#})=\sum_{\lambda, \mu }q(\lambda)q(\mu)q(a|\lambda ,b^{\#})q(b|\lambda ,\mu)q(c|\mu, b^{\#}) \, ,
\end{equation}
then there exists a joint distribution on the inflated DAG respecting 
\begin{multline}
\label{eq:markov2nd-order}
     q'(a_1,a_2,b_{11},b_{12},b_{21},b_{22},c_1,c_2|b_1^{\#},b_2^{\#})=\\
     =\sum_{\lambda_1, \mu_1,\lambda_2, \mu_2}q(\mu_1)q(\lambda_1)q(\mu_2) q(\lambda_2)q(a_1|\lambda_1 b_1^{\#})q(a_2|\lambda_2 b_2^{\#}) \times \\ 
     \times \prod_{i,j=1,2}q(b_{ij}|\lambda_i \mu_j)q(c_1|m_1 b^{\#})q(c_2|\mu_2 b_2^{\#}).
 \end{multline}
Although we cannot test~\eqref{eq:markov2nd-order} directly, the equation imposes linear constraints on the distribution $q$. 
Switching the labels $(\mu_1 \leftrightarrow \mu_2, \lambda_1 \leftrightarrow \lambda_2, a_1\leftrightarrow a_2,c_1\leftrightarrow c_2, b_{11}\leftrightarrow b_{22}, b_{12}\leftrightarrow b_{21}, b_1^{\#} \leftrightarrow  b_2^{\#})$ leaves equation~\eqref{eq:markov2nd-order} unchanged. 
This implies 
\begin{multline}
\label{eq: symmetryinflation}
    q'(a_1,a_2,b_{11},b_{12},b_{21},b_{22},c_1,c_2|b_1^{\#}b_2^{\#})=\\
        = q'(a_2,a_1,b_{22},b_{21},b_{12},b_{11},c_2,c_1|b_2^{\#}b_1^{\#})\, .
\end{multline}

We may also consider the same expression when $b^{\#}_1=b^{\#}_2=b^{\#}$, switching the labels $(\lambda_1\leftrightarrow \lambda_2, a_1 \leftrightarrow a_2, b_{11} \leftrightarrow b_{21}, b_{12}\leftrightarrow b_{22} )$ or $(\mu_1\leftrightarrow \mu_2, c_1 \leftrightarrow c_2, b_{11} \leftrightarrow b_{12}, b_{21}\leftrightarrow b_{22} )$ yields 
\begin{equation}
\label{eq: symmetryinflation1}
    \begin{aligned}
        &q'(a_1,a_2,b_{11},b_{12},b_{21},b_{22},c_1,c_2|b_1^{\#}=b_2^{\#}=b^{\#})=\\
        &=q'(a_2,a_1,b_{21},b_{22},b_{11},b_{12},c_1,c_2|b_1^{\#}=b_2^{\#}=b^{\#})\\
        &=q'(a_1,a_2,b_{12},b_{11},b_{22},b_{21},c_2,c_1|b_1^{\#}=b_2^{\#}=b^{\#}),
    \end{aligned}
    \end{equation}
 and it has to satisfy some trivial conditions such as normalization, non-negativity and that the outputs $a_i,c_i$ are independent of inputs $b_{j\neq i}^{\#}$. Finally, condition~\eqref{eq:markov2nd-order} tells us that $q'$ recovers $q$ via marginalization, but only some entries of $q$ are available since we, initially, only have $p$. This yields a set of linear conditions
 \begin{equation}
 \begin{aligned}
     \label{Marg-cond.}
     q'(\{a_i,b_{ii},c_i|b_i^{\#}=b_{ii}\})&=\prod_i q(a_i,b_{ii},c_i|b_i^{\#}=b_{ii})=\\
     &=\prod_i p(a_i,b_{ii},c_i).
 \end{aligned}
 \end{equation}
 An analogous procedure also follows for the general case where, in the n-th order, we would consider $n$ independent copies of $\Lambda$, $M$, $B^{\#}$ and $n^2$ copies of $A( \lambda ,B^{\#})$, $B(\lambda,\mu)$ and $C(\mu,B^{\#})$. We need to consider all the relabelings that leave the Markov model of the resulting network invariant and recover n-degree polynomials over the diagonal distribution $q_n'(\{A_i,B_{ii},C_i|B_i^{\#}=B_{ii}\})$.
 For the Bonet PR-Box \ref{eq:NS_Box} we are able to find incompatibility for $n=3$.
 It's also important to notice that e-separation + 2nd order Inflation yields, for our case where $|B|=|C|=2$ and $|A|=3$,  an LP with 2.304 variables and for 3rd level yields 884.736 variables, while the hierarchy detailed in \cite{NavascuesWolfe+2020+70+91} yields 20.736 variables for 2nd order and 23.887.872 variables for 3rd order inflation, which would be quite challenging even for an LP.

We can translate the unfeasible status of our certification in terms of a witness 
via Farkas' lemma~\cite{BoydVandenberghe}, which states that either the linear system $Ax=b$, $x\geq 0$ has a solution or $A^Ty\leq 0$ has a solution with $b^Ty<0$. 
Thus, if our LP certification has no solution for a given $p$, there exists a solution $y_{op}$ so that the symbolic expression 
\begin{equation}
    b^Ty_{op}\geq 0
\end{equation}
can be understood as a causal inequality, where $b=b(p)$ is a vector that has entries of all monomials of degree at most $n$ due to~\eqref{Marg-cond.}. 
Applying this procedure directly on our Bonet PR-Box with $p_A(0) = p_A(2) = 10/21$ and $p_A(1) = 1/21$ we are able to retrieve such witness in terms of cubic monomials of $p(a,b,c)$. 
For simplicity, we could make the mild assumption $p(c=f(a,b)+b)=1$, which can be guaranteed under classical models and is true for our candidate distribution \ref{eq:NS_Box}, to write down the inequality in Appendix~\ref{app: IneqEvansInflation} 

Our candidate distribution reaches a violation of $\beta_{PR}=3.2 \times 10^{-3}$ and the inequality gives us a white noise visibility of $v=0.9984$. 
We were not able to find any quantum violations of this witness.

\section{Deriving theory independent constraints}
\label{sec: Theory_independent_constraints}
So far, we have focused on the derivation of constraints valid for a classical description of Evans causal structure. In some cases, however, it might be interesting to have constraints valid for generalized probabilistic theories (GPTs), that is, constraints reflecting the topology of the causal network rather than the nature (classical, quantum or even post-quantum) of the sources. This theory independent constraints can be seen as genuine witnesses of the topology of the network and have already been derived for certain classes of causal networks \cite{kela2019semidefinite,aaberg2020semidefinite,beigi2021covariance}, most prominently the triangle network \cite{GisinBancalCai,renou2019limits}. In the following, we propose a general route for deriving such theory-independent constraints for the Evans causal structure and derive one specific inequality to demonstrate the method.

A crucial difference between a classical and a general probabilistic description of a causal network stems from the fact that the local changes performed by any part in the network cannot be arbitrary as, for instance, they may possess resources that cannot be perfectly cloned. That is, differently from the classical case, we can only consider non-fanout inflations, as we cannot broadcast information that comes from non-classical latent variables. To that respect, a notable difference in the Evans scenario is that, differently from the triangle, it is possible to include tripartite correlations from non-fanout inflations. 

Looking from this perspective, we can see that in the node splitting operation, discussed in the last section, the inward arrows to $A$ and $C$ do not change. Therefore, we can see this operation as a local modification in the network topology performed by $B$. Thus, e-separation should also follow for all no-signaling behaviors. 
In fact, it was shown in \cite{Evans2012,Finkelstein2021EntropicIC} that e-separation alone can impose constraints on GPT behaviors of the network. However, we can recursively employ these local operations in our network. Namely, after some part performs a local operation we can end up with a new network in which the new parts are independent and are also allowed to locally change the network structure. 
Notice that, in the last section, we used e-separation to derive constraints on the classical set of correlations, these constraints are violated by the PR-Box due to the explicit use of inflationary fanouts that only follow for classical latent variables.

Inspired by the hexagon inflation in \cite{GisinBancalCai}, Bob can set up this new experiment such that the laboratory of $Bob^{\#}$ is very far away and $Bob^{\#}$ may adopt a similar strategy, again changing the network topology and this should be oblivious to Alice, Bob and Charlie (see Fig.\ref{fig: InflationI}). Notice that if we were to consider two independent copies of the node $B^{\#}$, which we could, we would retrieve the hexagon by interpreting the $B^{\#}$ nodes as observable shared randomness between the parts. In our case, the d-separation and marginal independence imply non-linear constraints involving non-observable terms, posing a more challenging non-linear quantifier elimination procedure. Fortunately, we do not need to consider independent copies of the $B^{\#}$ variable, since it is an observable variable and therefore we can assume it to be classical regardless of the resources the parts share. This means that the information sent to $A$ and $C$ can be perfectly cloned, thus we may broadcast it throughout the network. With these remarks, the conditions implied are 
\begin{equation}
    q(a,a',b=b',c,c'|b^{\#}=b)=p(a,b,c)p(a',b'=b,c'),
\end{equation} 

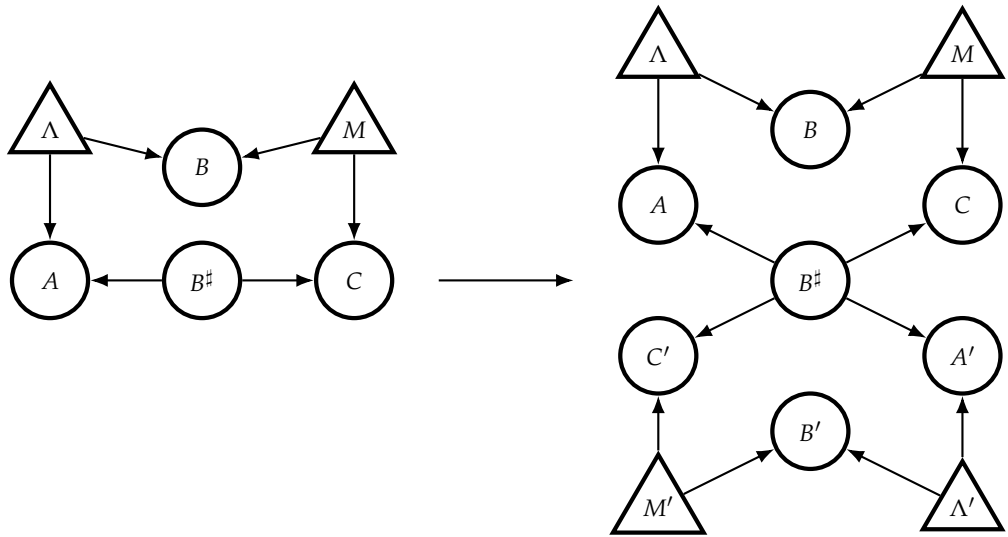
\begin{figure*}
\begin{center}
    \begin{tikzpicture}

	\node[var] (as) at (-2,0) {$A$};
    \node[var] (cs) at (2,0) {$C$};
    \node[latent] (ls) at (-2,2) {$\Lambda$};
    \node[latent] (gs) at (2,2) {$M$};
    \node[var] (bs) at (0,1.5) {$B$};
    \node[var] (bss) at (0,0) {$B^\sharp$};
    \path[dir] (ls) edge (as) (ls) edge (bs); 
    \path[dir] (gs) edge (bs) (gs) edge (cs); 
    \path[dir] (bss) edge (as) (bss) edge (cs);
    
	\node (f1) at (3,0) {};
	\node (f2) at (5,0) {};
    \path[dir] (f1) edge (f2);

	\node[var] (ai1) at (6,1) {$A$};
    \node[var] (ci1) at (6,-1) {$C'$};
	\node[var] (ai2) at (10,-1) {$A'$};
    \node[var] (ci2) at (10,1) {$C$};
    \node[latent] (li1) at (6,3) {$\Lambda$};
    \node[latent] (gi1) at (6,-3) {$M'$};
    \node[latent] (li2) at (10,-3) {$\Lambda'$};
    \node[latent] (gi2) at (10,3) {$M$};
    \node[var] (bi1) at (8,2) {$B$};
    \node[var] (bi2) at (8,-2) {$B'$};
    \node[var] (bis) at (8,0) {$B^\sharp$};
    \path[dir] (bis) edge (ai1) (bis) edge (ci1) (bis) edge (ai2) (bis) edge (ci2); 
    \path[dir] (li1) edge (ai1) (li1) edge (bi1);
    \path[dir] (li2) edge (ai2) (li2) edge (bi2);
    \path[dir] (gi1) edge (ci1) (gi1) edge (bi2);
    \path[dir] (gi2) edge (ci2) (gi2) edge (bi1);
\end{tikzpicture}
        \caption{\textbf{Inflation of the Evans scenario - } In order to derive the topological constraints of the scenario on the left we consider the non-fanout inflation on the right. Importantly, the operations needed to bring the original Evans scenario to the node splitted DAG and then to the inflated causal structure follow for all GPTs and can be performed locally by the parts that conform the network. }
        \label{fig: InflationI}
\end{center}
\end{figure*}
as well as  $$q(b,b'|b^{\#})=q(b,b')\quad \forall b^{\#},$$ and normalization conditions.
These equality constraints together with the non-negativity conditions will form a set of linear inequalities on the variables $q$. Since the quadratic terms of $p$ may be treated like symbolic constants, we may perform Fourier Motzkin elimination to arrive at quadratic inequalities that should follow for all no-signaling behaviors. 

The procedure outlined above is quite general and could be used to derive all constraints following a given inflation. However, it can be very costly and out of computational reach even for seemingly simple scenarios such as Evans. To circumvent that, similarly to the approach discussed in the last section, given a behavior from the Evans scenario, we may employ a linear program (LP) to detect incompatibility between the distribution and the set of inequalities. That is, given a distribution $p(a,b,c)$ we solve:
\begin{align}\nonumber
\text{max}_{q} \quad &1\\
\text{s.t.} \quad &q(a,a',b=b',c,c'|b^{\#}=b)=p(a,b,c)p(a',b'=b,c')\label{LP}\\
&q(b,b'|b^{\#}) = q(b, b')\quad\forall \; b^{\#} \nonumber\\
&\sum q(a,a',b,b',c,c'|b^{\#})=1 \quad \forall b^{\#}\nonumber\\
& q(a,a',b,b',c,c'|b^{\#}) \geq 0 \quad \forall a,a',b,b',c,c', b^{\#}\nonumber.
\end{align}
If this LP is feasible, the behavior $p(a,b,c)$ is compatible with the constraints. Otherwise, it is not. If we have a non-feasible distribution we can then rely on the dual of the LP \eqref{LP} to easily find the corresponding witness (a non-linear inequality) detecting it.

Using this idea, we can show that the distribution $p_{unfeasible}(a,b,c)$ given by $p(0,0,0)=p(1,0,1) = 1/2$ is incompatible with the Evans topology, as it returns an unfeasible LP. From the dual problem, we can obtain a witness for the incompatibility, which reads as
\begin{align}
&p(0,0,0)^2 + p(0,0,1)p(0,0,0) + p(0,0,1)^2\nonumber \\
&+ 4 p(1,0,1)p(0,0,0) + p(1,0,0)^2+\nonumber\\ 
&+ p(1,0,1)p(1,0,0) + p(1,0,1)^2 \le 1.
\end{align}
This theory-independent polynomial inequality is violated by the distribution $p(0,0,0)=p(1,0,1) = 1/2$ as the right-hand side of it reaches the value $3/2$. Furthermore, this constraint turns out to be quite resistant to noise as can be seen by considering the mixed distribution $v p_{unfeasible}+(1-v)p_W$ ($p_W$ being the white noise distribution where all binary outcomes have the same $1/8$ probability). The threshold visibility for the inequality violation is $v_{crit}\approx 0,7398$.

\section{Discussion}
Bell’s theorem is the most stringent notion of nonclassicality since it only relies on causal assumptions about the experimental apparatus but remains agnostic of any internal mechanisms or physical details of the measurement and state preparation devices. Given its foundational as well as applied consequences, generalizing such results to causal structures of increasing complexity and with different topologies is a timely and promising  research venue.

Within this context, the contribution of this paper is to analyze a simple yet elusive causal structure. It is known that there are three different classes of causal structures with three observable variables that can potentially give rise to a mismatch between classical and quantum predictions. Two of those are indeed known to support quantum non-classicality and have been the focus of much of the literature on the topic recently. On one side we have the instrumental causal structure \cite{pearl1995testability,chaves2018quantum}, a scenario of crucial relevance for estimating causal influences \cite{balke1997bounds} and that, in particular, brings to light the relevance of considering interventions, rather than pure passive observations, in order to reveal the non-classicality of a given quantum process \cite{gachechiladze2020quantifying,agresti2022experimental}. On the other side, we have the triangle scenario \cite{fritz2012beyond,renou2019genuine} with the central feature of, differently from Bell's theorem, not requiring external inputs choosing between the measurement of different observables and solely resorting on the independence of sources for leading to non-classical behaviors \cite{polino2022experimental}. Here we focus on the third of these causal structures, the Evans scenario \cite{evans2016graphs}, a sort of blend between the instrumental and triangle structures since it involves independent sources and no external inputs but does presume direct causal influences between the measurement outcomes.

Being akin to the entanglement swapping experiment \cite{zukowski1993event} it is natural to expect that the correlations obtained by considering a Bell state measurement performed on shared Bell states should lead to non-classical correlations. Quite the opposite, we proved that such an experiment does have a classical causal model for any dimension. So, in order to prove non-classicality, we explore different venues, establishing the connections of Evans with the instrumental and bilocality scenarios. In particular, we prove that non-classicality in the bilocality structure is a necessary but not sufficient condition for non-classicality in the Evans case. Unfortunately, however, this bridge with well-known scenarios proved not enough to find non-classicality in the Evans structure, motivating us to search for alternative methods to study the set of correlations entailed by it.

With that in mind, we discussed the features of the Gurobi optimizer \cite{bixby2007gurobi}, capable of handling non-convex quadratic optimization problems and being perfectly fitted to analyze the Evans causal structure. With the Gurobi optimizer, we could not only numerically prove that the analogous of a PR-box in the instrumental scenario does lead to non-classical correlations in the Evans scenario but also recover numerical witness for it. Going beyond this numerical study we combined the inflation \cite{WolfeSpekkensFritz+2019} and e-separation \cite{Evans2012} techniques to derive polynomial Bell inequalities that can be violated, an unambiguous proof that the Evans causal structure also can support non-classical correlations, even though they are of a post-quantum nature. Finally, we showed how to derive theory-independent constraints for the Evans scenario, that is, inequalities that reflect the topology of the network, similar to what has been done to the triangle scenario \cite{GisinBancalCai}.

The natural next step is to finally resolve whether quantum non-classical behavior is or is not possible in the Evans scenario. It is worth remarking that here we focused only on the observed distribution $p(a,b,c)$. However, as shown recently for the instrumental scenario \cite{gachechiladze2020quantifying}, interventions could be a possible way to unveil the non-classicality, that is, resorting to interventional data of the form $p(a,c\vert do(b))$, where $do(b)$ refers to interventions in the central node of the network.  Either way, the resolution of this problem will certainly be of relevance to the larger question of understanding what are the crucial ingredients of the emergence of non-classical behavior in causal structures of growing size and complexity.
\section*{ACKNOWLEDGMENTS}

We thank Elie Wolfe for fruitful discussions. This work was supported by the Serrapilheira Institute (Grant No. Serra-1708-15763), by the Simons Foundation (Grant Number 884966, AF), the Brazilian National Council for Scientific and Technological Development (CNPq) via the National Institute for Science and Technology on Quantum Information (INCT-IQ) and Grant No. and 307295/2020-6, and the São Paulo Research Foundation FAPESP (Grant Nos. 2018/07258-7, 2022/03792-4).

\bibliography{ref}
\appendix
\onecolumngrid

\section{Classical simulation of the PR-box with symmetric instrument}
\label{app: Classical_PR-Box}

We show how the correlations of a PR-Box \ref{eq:NS_Box} between Bob and Charlie, with uniform distribution of Alice can be simulated classically. The distribution under scrutiny is 
\begin{equation}
    p(a,b,c)=p(a)p(bc|a)=\begin{cases}
    1/6 &\text{ if } c=f(a,b)+b\\
    \quad 0 &\text{ otherwise }.
\end{cases}
\end{equation}
It is enough to take $\lambda=\lambda_0\lambda_1$, where $\lambda_i\in \{0,1,2\}$ and $\mu=\mu_0\mu_1$ where $\mu_i\in \{0,1\}$, with distributions 
\begin{equation}
    p(\gamma)=([01]+[11]+[20]+[22])/4 \quad \text{ and }\quad  p(\alpha)=[10]/3+2[01]/3,
\end{equation}
where $[a_0a_1]=\delta_{\lambda_0,a_0}\delta_{\lambda_1,a_1}$ and $[c_0c_1]=\delta_{\mu_0,c_0}\delta_{\mu_1,c_1}$. Define deterministic response functions for Alice and Charlie 
\begin{eqnarray}
\nonumber
    p(a|\lambda,b)=\delta_{a,\lambda_b}\\
p(c|\mu,b)=\delta_{c,\mu_b}.
\end{eqnarray}
Bob also has a deterministic local response function 
\begin{equation}
p(b|\lambda\mu)=\begin{cases}
    1 \quad \text{ if } \mu_b=f(\lambda_b,b)+b\\
     0\quad  \text{ otherwise}
\end{cases}   .
\end{equation}
Because of our choice of strategies, this response function is well defined, i.e., the expression $\mu_b=f(\lambda_b,b)+b$ has a unique solution $b$, for every strategy $\lambda\mu$ given to Bob. For our case, since $b$ and $\mu_i$ are bits, we can write $b=\mu_b+f(\lambda_b,b)$, by summing $b+\mu_b$ in both sides. The deterministic response can be rewritten as 
\begin{equation}
    p(b|\lambda\mu)=\delta_{b,\mu_b+f(\lambda_b,b)}.
\end{equation}

Summing up, this implies that
\begin{equation}
    p(a,b,c)=\sum_{\lambda,\mu}p(\lambda)p(\mu) \delta_{a,\lambda_b} \delta_{b,\mu_b+f(\lambda_b,b)}\delta_{c,\mu_b}=\delta_{b,c+f(a,b)}\left(\sum_{\lambda,\mu}p(\lambda_b=a)p(\mu_b=c)\right),
\end{equation}
and we can check that for each case where $b=c+f(a,b)$ the sum above is 1/6, and thus this classical model recovers the distribution $p(a,b,c)$.

\section{Inequality derived from inflation with e-separation }
\label{app: IneqEvansInflation}
Here we explicitly  show the inequality derived using 3rd-order inflation after the node splitting operation in the Evans scenario. The inequality is obtained via the dual solution of the linear program detailed in section \ref{sec: e-sep+inflation technique}. We consider the inflation shown in Fig. \ref{fig: Inflation3rd}, where the compatibility test can be cast as a linear program on the joint probability distribution
\begin{equation}
    q(a_1,a_2,a_3,b_{11},b_{12},b_{13},b_{21},b_{22},b_{23},b_{31},b_{32},b_{33},c_1,c_2,c_3\vert b^{\#}_1,b^{\#}_2,b^{\#}_3),
\end{equation}
that must respect non-negativity and normalization. Defining
\begin{equation}
\begin{aligned}
  &q_3(a_1,a_2,a_3,b_{11},b_{22},b_{33},c_1,c_2,c_3|b^{\#}_1,b^{\#}_2,b^{\#}_3)=\sum_{b_{i\neq j}}q(a_1,...,b_{11},...c_3\vert b^{\#}_1,b^{\#}_2,b^{\#}_3)
  \end{aligned},
\end{equation}
the factorization constraints becomes
\begin{equation}
\begin{aligned}
q_3(a_1,a_2,a_3,b_{11},b_{22},b_{33},c_1,c_2,c_3\vert (b^{\#}_1,b^{\#}_2,b^{\#}_3)=(b_{11},b_{22},b_{33}))=\prod_{i=1,2,3}p(a_i,b_{ii},c_i).
\end{aligned}
\end{equation}
Where $p(a,b,c)$ is the probability distribution under scrutiny. Ultimately, we need to impose symmetry constraints implied by the global markov model of the inflated network, these conditions will be simply equality constraints on the entries of $q$. This can be done in analogy to \ref{eq: symmetryinflation} and \ref{eq: symmetryinflation1} for when $(b^{\#}_1,b^{\#}_2=b^{\#}_3)$,  $(b^{\#}_1=b^{\#}_2,b^{\#}_3)$ and $(b^{\#}_1=b^{\#}_2=b^{\#}_3).$
Due to the factorization
conditions imposed in the diagonal distribution $q_3$, the obtained inequality contains the corresponding cubic elements and is given by
\begin{equation}
\label{eq: witness-inflation}
    \begin{split}
        &p_{ABC}(0,0,0)^3/3 + p_{ABC}(0,0,0)^2p_{ABC}(0,1,1) + p_{ABC}(0,0,0)^2p_{ABC}(1,0,0) + p_{ABC}(0,0,0)^2p_{ABC}(1,1,0)  \\
        &+ p_{ABC}(0,0,0)^2p_{ABC}(2,0,1) + p_{ABC}(0,0,0)^2p_{ABC}(2,1,1) + p_{ABC}(0,0,0)p_{ABC}(0,1,1)^2 + p_{ABC}(0,0,0)p_{ABC}(2,1,1)^2 \\
        &+2p_{ABC}(0,0,0)p_{ABC}(0,1,1)p_{ABC}(1,0,0) + p_{ABC}(0,1,1)^3/3 + p_{ABC}(0,1,1)^2p_{ABC}(1,0,0) + p_{ABC}(0,1,1)^2p_{ABC}(1,1,0)\\
        &+ 2p_{ABC}(0,0,0)p_{ABC}(0,1,1)p_{ABC}(2,0,1)+ 2p_{ABC}(0,0,0)p_{ABC}(0,1,1)p_{ABC}(1,1,0)+ p_{ABC}(2,0,1)^2p_{ABC}(2,1,1)\\
        &+ 2p_{ABC}(0,0,0)p_{ABC}(1,0,0)p_{ABC}(1,1,0)  + 2p_{ABC}(0,0,0)p_{ABC}(1,0,0)p_{ABC}(2,0,1) + p_{ABC}(2,0,1)p_{ABC}(2,1,1)^2 \\
        & + 2p_{ABC}(0,0,0)p_{ABC}(1,0,0)p_{ABC}(2,1,1) + p_{ABC}(0,0,0)p_{ABC}(1,1,0)^2  + p_{ABC}(0,0,0)p_{ABC}(1,0,0)^2 \\
        &+ 2p_{ABC}(0,0,0)p_{ABC}(1,1,0)p_{ABC}(2,0,1)  + 2p_{ABC}(0,0,0)p_{ABC}(1,1,0)p_{ABC}(2,1,1) - 5p_{ABC}(0,0,0)p_{ABC}(2,0,1)^2/3 \\ 
        &  + p_{ABC}(0,1,1)^2p_{ABC}(2,0,1) + p_{ABC}(0,1,1)^2p_{ABC}(2,1,1) + 2p_{ABC}(0,0,0)p_{ABC}(2,0,1)p_{ABC}(2,1,1) \\
        &+ p_{ABC}(0,1,1)p_{ABC}(1,0,0)^2 + 2p_{ABC}(0,1,1)p_{ABC}(1,0,0)p_{ABC}(1,1,0) + p_{ABC}(1,1,0)p_{ABC}(2,1,1)^2 \\
        &+ 2p_{ABC}(0,1,1)p_{ABC}(1,0,0)p_{ABC}(2,0,1)  + 2p_{ABC}(0,1,1)p_{ABC}(1,0,0)p_{ABC}(2,1,1) - p_{ABC}(2,0,1)^3 \\
        &+ p_{ABC}(0,1,1)p_{ABC}(1,1,0)^2 + 14p_{ABC}(0,1,1)p_{ABC}(1,1,0)p_{ABC}(2,0,1)/3 + 2p_{ABC}(0,0,0)p_{ABC}(0,1,1)p_{ABC}(2,1,1)\\
        &+ 2p_{ABC}(0,1,1)p_{ABC}(1,1,0)p_{ABC}(2,1,1) + p_{ABC}(0,1,1)p_{ABC}(2,0,1)^2 + 2p_{ABC}(0,1,1)p_{ABC}(2,0,1)p_{ABC}(2,1,1) \\
        & + p_{ABC}(0,1,1)p_{ABC}(2,1,1)^2 + p_{ABC}(1,0,0)^3/3 + p_{ABC}(1,0,0)p_{ABC}(2,1,1)^2 + p_{ABC}(1,1,0)^3/3 \\
        & + p_{ABC}(1,0,0)^2p_{ABC}(1,1,0) + p_{ABC}(1,0,0)^2p_{ABC}(2,0,1) + p_{ABC}(1,0,0)^2p_{ABC}(2,1,1)+ p_{ABC}(2,1,1)^3/3 \\
        &+ p_{ABC}(1,0,0)p_{ABC}(1,1,0)^2 + 2p_{ABC}(1,0,0)p_{ABC}(1,1,0)p(2,0,1)+ 14p_{ABC}(1,1,0)p_{ABC}(2,0,1)p_{ABC}(2,1,1)/3\\
        & + 2p_{ABC}(1,0,0)p_{ABC}(1,1,0)p_{ABC}(2,1,1) - 5p_{ABC}(1,0,0)p_{ABC}(2,0,1)^2/3  + 2p_{ABC}(1,0,0)p_{ABC}(2,0,1)p_{ABC}(2,1,1)\\
        & + p_{ABC}(1,1,0)^2p_{ABC}(2,0,1) + p_{ABC}(1,1,0)^2p_{ABC}(2,1,1) + p_{ABC}(1,1,0)p_{ABC}(2,0,1)^2  \leq 1/3.
    \end{split}
\end{equation}
For simplicity we have assumed $p(c=f(a,b)+b)=1$, to keep only the nonzero terms of the inequality for our candidate distribution. This condition is also not too restrictive as it can be guaranteed under classical models, for example $p(0,0,0)=1$ trivially satisfies this constraint. If we choose $p_A(0)=p_A(2)=10/21$ and $p_A(2)=1/21$ the distribution violates the inequality \ref{eq: witness-inflation} by $\beta_{PR} \approx 3.2 \times 10^{-3}$ and has a very low resistance when combined with the uniform distribution with visibility $v\approx 0.9984$.
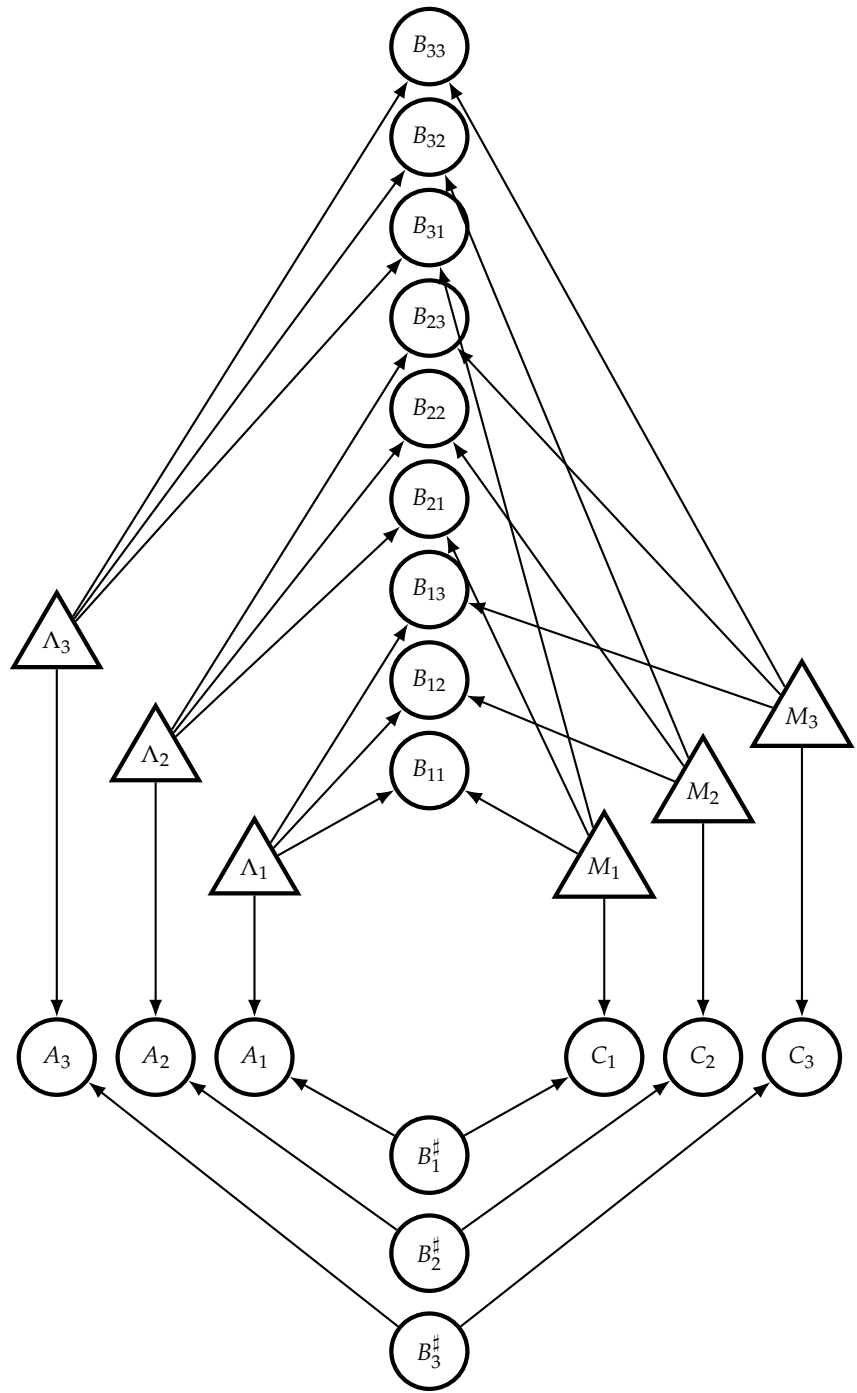
\begin{figure*}
\begin{tikzpicture}

	\foreach \x in {1,2,3} { 
		\foreach \y in {1,2,3} {
		    \node[var] (b\x\y) at (0, 1.2*3*\x+1.2*\y -1) {$B_{\x\y}$};
	}}
	\foreach \x in {1,2,3} {
	    \node[var] (a\x) at (-1-1.3*\x,0) {$A_\x$};
	    \node[latent] (l\x) at (-1-1.3*\x, 1+1.5*\x) {$\Lambda_\x$};
		\path[dir] (l\x) edge (a\x);
		\foreach \y in {1,2,3} \path[dir] (l\x) edge (b\x\y);
	}
	\foreach \y in {1,2,3} {
		\node[var] (c\y) at (1+1.3*\y,0) {$C_\y$};
	    \node[latent] (m\y) at (1+1.3*\y, 1.5+\y) {$M_\y$};
		\path[dir] (m\y) edge (c\y);
		\foreach \x in {1,2,3} \path[dir] (m\y) edge (b\x\y);
	}
	\foreach \z in {1,2,3} {
		    \node[var] (bs\z) at (0,-1.3*\z) {$B^\sharp_\z$};
			\path[dir] (bs\z) edge (a\z) (bs\z) edge (c\z); 
	}

\end{tikzpicture}
\caption{\textbf{Inflated Evans scenario} where $\Lambda_i$ and $M_i$ are copies of the original sources $\Lambda$ and $M$ and $A_i$ ,$B_{i,j}$, $C_i$ and $B^{\#}_i$ are copies of the original observable variables $A$ ,$B$, $C$ and $B^{\#}$. The indices of $A$, $B$ and $C$ indicate the latent dependency of each observable copy variable.}
\label{fig: Inflation3rd}
\end{figure*}

\end{document}